\newtheorem{theorem}{Theorem}
\newtheorem{definition}{Definition}
\title{\LARGE \bf
Dissipative iFIR filters for data-driven design
}
\author{Zixing Wang$^{a}$,  Yi Zhang$^{a}$, and  Fulvio Forni$^{a}$
\thanks{*The work of Zixing Wang was supported by CSC Cambridge Scholarship.  The work of Yi Zhang was supported by Sciences Research Council and AgriFoRwArdS CDT EP/S023917/1.}
\thanks{$^{a}$ Department of Engineering, University of Cambridge, CB2 1PZ Cambridge, U.K.
         \{{\tt\small zxw20@.cam.ac.uk}, {\tt\small yz892@cam.ac.uk}, {\tt\small f.forni@eng.cam.ac.uk}\}.}
}
\begin{document}

\maketitle
\thispagestyle{empty}
\pagestyle{empty}

\begin{abstract}

We tackle the problem of providing closed-loop stability guarantees with a scalable data-driven design. We combine virtual reference feedback tuning with dissipativity constraints on the controller for closed-loop stability. The constraints are formulated as a set of linear inequalities in the frequency domain. This leads to a convex problem that is scalable with respect to the length of the data and the complexity of the controller.
An extension of virtual reference feedback tuning to include disturbance dynamics is also discussed. The proposed data-driven control design is illustrated by a soft gripper impedance control example.
\end{abstract}

\section{INTRODUCTION}
One of the central research problems of data-driven control is ensuring closed-loop stability. To design stabilizing controllers, most approaches combine a data-based representation of the system with optimization tools, such as linear matrix inequalities (LMI) or sums of squares (SOS). For linear time-invariant systems, representative examples include the behavioral approach of \cite{bib:Tesi2020} and informativity \cite{bib:Henk2023}. In the nonlinear setting, examples include approaches in the setting of parameter-varying systems \cite{bib:Verhoek2024b} and polynomial interpolation \cite{bib:Guo2022a, bib:Martin2023b}. Although the use of LMI / SOS constraints provides an effective approach to address the stability of data-driven controllers in closed loop, solving LMI / SOS formulations may face scalability issues when the dimension of the problem increases. 

In this paper, we propose a data-driven control design method that guarantees \emph{closed-loop stability} and is \emph{scalable} with respect to the length of training data and the complexity of the controller. For \emph{closed-loop stability}, we follow the philosophy of dissipativity learning control \cite{bib:Tang2021}
\textcolor{black}{. We consider an uncertain family of plant characterized by a common dissipativity property. We assume this property is obtained through estimation tools such as \cite{bib:Koch2022,bib:Martin2023c}. We then take into account this property for control synthesis, to guarantee stability. We tackle the synthesis of the controller
} 
by applying virtual reference feedback tuning (VRFT) constrained to compatible dissipativity conditions on the controller. The resulting closed loop is guaranteed to be stable, according to passivity or small-gain theorems.

For \emph{scalability}, we take advantage of the particular structure of iFIR controllers (integrator plus finite impulse response filter) \cite{bib:Wang2024,bib:Formentin2011,bib:vanHeusden2011,bib:Yahagi2022-overfitting}, to derive dissipativity constraints that take the form of a finite number of linear inequalities in the frequency domain. We achieve this by first deriving a graphical representation of the dissipativity property of interest. Then, this `feasible' region is approximated using polygons and expressed in terms of a finite set of linear inequalities, through sampling. {\color{black}In contrast to other contributions
\cite[Sec. 4.2]{bib:Tang2021}, the resulting constraints do not depend on the length of training data and scale linearly with the order of the controller.}  
The proposed synthesis \textcolor{black}{also} extends \cite{bib:Wang2024} to more general dissipativity properties, including excess/shortage of passivity and gain conditions. These make the synthesis more flexible, reducing the conservativeness. With the proposed design, the goals of closed-loop stability and closed-loop performance are decoupled; the former is structurally guaranteed via dissipativity and thus not affected by
data scarcity or low-quality data. 

As a side contribution, we also extend the VRFT control synthesis to take into account disturbance dynamics, {\color{black}generalizing the case of additive disturbances at the input and output of the plant \cite{bib:Jeng2016, bib:Eckharda2018, bib:Campestrini2018, bib:Campi2002b}. The handling of disturbance dynamics is desirable in some applications, such as impedance control in mechanics \cite{bib:Hogan1985a,bib:Colgate1988,bib:Hogan2021}, to achieve a predefined target-compliant behavior beyond the rejection of external disturbance forces}.
 Compared to \cite{bib:Rojas2011, bib:Masuda2017}, our approach does not need any measure of the disturbance. It is based on a 2 degree-of-freedom controller \cite{bib:Campi2002b} {\color{black}to shape reference tracking and disturbance sensitivity} simultaneously. 

The paper is organized as follows. Section \ref{sec:vrft} revises the VRFT synthesis and extends it to disturbance dynamics. Section \ref{sec:dissipativityDefi} revises the classical dissipativity and
related graphical interpretation. Section \ref{sec:dissi_constraints} discusses in detail the dissipativity constraints that achieve closed-loop stability, based on LMIs and on constraints in the frequency domain. The proofs of the main theorems are given in Section \ref{sec:proof}. Section \ref{sec:examples} illustrates the effectiveness of our approach in an example focusing on impedance control for a soft gripper.

\section{Data-driven Design of iFIR Controllers} \label{sec:vrft}
An iFIR controller $C$ \cite{bib:Wang2024} of order $m_{\mathrm{fb}}\in \mathbb{N}$ is a parallel interconnection between an integrator and a finite impulse response (FIR) filter 
\begin{equation}  \label{eq:ifir}
    C(z) = \underbrace{\frac{\gamma T_{s}}{1-z^{-1}}}_{\mbox{integrator}} + \underbrace{\sum_{t=0}^{m_{\mathrm{fb}}-1} g^{\mathrm{fb}}(t)  z^{-t}}_{\mbox{FIR}}
\end{equation}
where $\mathbf{g}^{\mathrm{fb}} = \{g^{\mathrm{fb}}(t)  \}_{t=0}^{m_{\mathrm{fb}}-1}$ are the impulse response coefficients of the FIR filter.
In comparison to the classical PID controller, the iFIR controller keeps the integral action and
replaces the action of proportional and derivative components with a general FIR filter, to
provide more flexibility in shaping the closed-loop performance.

Consider the closed loop in Figure \ref{fig:blockdiag1DOF_P1}, where $P_{1}$ is a stable, discrete, possibly nonlinear, time-invariant, single-input-single-output (SISO) plant, whose dynamics are unknown. $C$ is the iFIR controller in \eqref{eq:ifir}. 
Given a stable, discrete reference model $M_{r}$ chosen by the user, we aim at tuning the iFIR controller $C$ so that the closed-loop response $\mathbf{y}= \{y(t)\}_{t=0}^{N-1}$, for an arbitrary reference $\mathbf{r}= \{r(t)\}_{t=0}^{N-1}$, matches the target response $\mathbf{y}^{*}= \{ M_{r}(r)(t)\}_{t=0}^{N-1}$. That is, for all $N \geq 0$, find an iFIR controller $C$ as the minimizer of $\min || \mathbf{y} - \mathbf{y}^{*}   ||_{2}^{2}$.

\begin{figure}[htbp]
    \begin{center}
        \includegraphics[width=0.66\columnwidth]{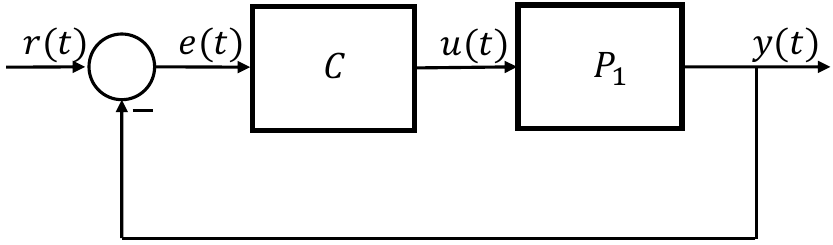} \vspace{-2mm}
        \includegraphics[width=0.3\columnwidth]{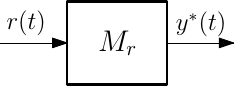}
        \caption{A block diagram of a closed-loop system and of its desired behavior, modeled by the reference model $M_r$.} 
        \label{fig:blockdiag1DOF_P1}
    \end{center}
\end{figure}
\vspace{-3mm}

We achieve this control goal using virtual reference feedback tuning (VRFT) \cite{bib:Campi2002}.
We solve the following data-fitting problem
\begin{equation} \label{eq:obj1DOF_P1}
    \min_{C(z)} \frac{1}{N} \sum_{t=0}^{N-1} \left(u(t) - C(M_{r}^{-1}(y) - y)(t) \right)^{2} 
\end{equation}
where $\mathbf{u} = \{u(t)\}_{t=0}^{N-1}$ is the exciting signal applied to $P_{1}$ and $\mathbf{y} = \{y(t)\}_{t=0}^{N-1}$ is the measured plant's output. The objective function in \eqref{eq:obj1DOF_P1} can be interpreted as follows. Assume $\mathbf{u}, \mathbf{y}$ are data measured in a closed loop with an ideal controller $C^{*}$ that achieves perfect matching $(\mathbf{y} = \mathbf{y}^*)$. We can compute the data of the \emph{virtual reference} $\mathbf{r} = \{r(t)\}_{t=0}^{N-1}$ as $r(t) = M_{r}^{-1}(y)(t)$. Hence, we obtain the input-output data $(\mathbf{e},\mathbf{u})$ of $C^{*}$ where $\mathbf{e} =\{e(t) = M_{r}^{-1}(y)(t) - y(t)\}_{t=0}^{N-1}$. An optimal iFIR controller $C$ that best approximates $C^{*}$ can be tuned via the data-fitting problem in \eqref{eq:obj1DOF_P1}. According to VRFT theory \cite[Sec. 3.1]{bib:Campi2002}, in the case of $P_{1}$ being linear time invariant (LTI), solving \eqref{eq:obj1DOF_P1} is equivalent to solving 
\begin{equation}
    \min_{C(z)} \left \| M_r(z) - \frac{P_{1}(z)C(z)}{1+P_{1}(z)C(z)} \right\|_{2}^{2}
\end{equation}
when the data length is infinite $N \to \infty$,  $\mathbf{u}$ is the realization of a stationary and ergodic stochastic process and $C^{*}(z)$ can be represented by the structure in \eqref{eq:ifir}.  

It is natural to consider the response to disturbances in addition to the reference tracking. Consider the closed loop in Figure \ref{fig:blockdiag1DOF_P1P2}.
$P_{2}$ is a stable, discrete, possibly nonlinear, time-invariant SISO system that represents the disturbance dynamics. $d$ is the disturbance and we assume that it is a known excitable signal during the design phase of our controller. Given two stable discrete reference models chosen by the user, $M_{r}$ and $M_{d}$, for arbitrary $\mathbf{r}$ and $\mathbf{d}$ we aim at tuning the iFIR controller $C$ as the minimizer of $\min || \mathbf{y} - \mathbf{y}^{*}   ||_{2}^{2}$ where $\mathbf{y}^{*} = \{y^{*}(t) = M_{r}(r)(t) + M_{d}(d)(t)\}_{t=0}^{N-1}$. 

\begin{figure}[htbp]
    \begin{center} \vspace{1mm}
    \includegraphics[width=0.74\columnwidth]{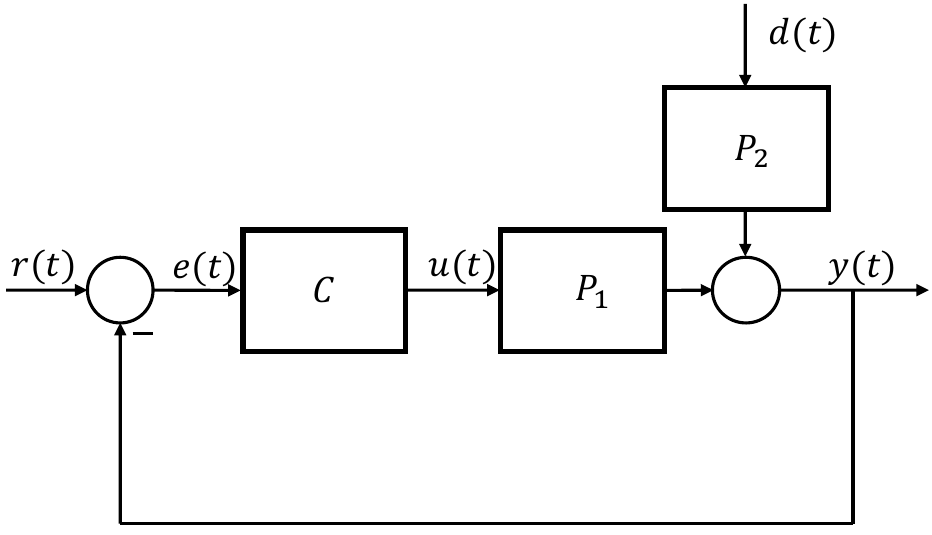} 
        \caption{Closed-loop system with disturbance dynamics.} \vspace{-8mm}
        \label{fig:blockdiag1DOF_P1P2}
    \end{center}
\end{figure}

We address this control problem by extending \eqref{eq:obj1DOF_P1}. Let $\mathbf{u}$ and $\mathbf{d}$ be the exciting signals to $P_{1}$ and $ P_{2}$, respectively. Let $\mathbf{y}$ be the measured plant's output. Assuming perfect matching is achieved, we can derive the data of the virtual reference $\mathbf{r}$ as $r(t) = M_{r}^{-1}(y - M_{d}(d))(t)$. Hence, we obtain the input-output data $(\mathbf{e},\mathbf{u})$ of the ideal controller $C^{*}$ with $\mathbf{e} = \{ e(t) = M_{r}^{-1}(y - M_{d}(d)(t)) - y(t)\}_{t=0}^{N-1}$. Thus, the new data-fitting problem can be formulated as 
\begin{equation} \label{eq:obj1DOF_P1P2}
    \min_{C(z)} \frac{1}{N} \sum_{t=0}^{N-1} \left( u(t) -  C \left( M_{r}^{-1}(y - M_{d}(d))(t)  - y(t) \right)   \right)^{2}.
\end{equation}

Unfortunately, the minimizing controller of \eqref{eq:obj1DOF_P1P2} may not have enough
degrees of freedom to approximate the target models $M_r$ and $M_d$ well.
For instance, for linear plants, the ideal controller achieving perfect matching 
satisfies 
$M_r(z) = \frac{P_{1}(z)C^*(z)}{1+P_{1}(z)C^*(z)}$
and $ M_d(z) = \frac{P_{2}(z)}{1+P_{1}(z)C^*(z)}$, 
where the right-hand side of these identities are the transfer functions from $\mathbf{r}$ to $\mathbf{y}$ and from $\mathbf{d}$ to $\mathbf{y}$ of the closed loop in Figure \ref{fig:blockdiag1DOF_P1P2}. We get,
\begin{equation}
\label{eq:perfect_matching_2dof}
C^{*}(z) = \frac{M_r(z)}{P_{1}(z)(1-M_{r}(z))}
\ \mbox{ and } \
C^{*}(z) = \frac{P_{2}(z) - M_{d}(z)}{M_{d}(z)P_{1}(z)}
\end{equation}
Hence, unless $\frac{M_r(z)}{P_{1}(z)(1-M_{r}(z))} = \frac{P_{2}(z) - M_{d}(z)}{M_{d}(z)P_{1}(z)}$, perfect matching is not possible. This poses significant constraints on the choice of the target models $M_{r}(z)$ and $M_{d}(z)$, \cite[Chp. 2.3.3]{bib:BCE2011}.

Greater flexibility can be achieved by the classical two-degree-of-freedom (2DOF) structure \cite{bib:Campi2002b} in Figure \ref{fig:blockdiag2DOF_P1P2}, 
where $C$ is an iFIR in \eqref{eq:ifir} and $F$ is a FIR prefilter given by
\begin{equation} \label{eq:prefilter}
    F(z) = \sum_{t=0}^{m_{\mathrm{ff}}-1} g^{\mathrm{ff}}(t) z^{-t} \qquad m_{ff} \in \mathbb{N}.
\end{equation}
where $\mathbf{g}^{\mathrm{ff}} = \{g^{\mathrm{ff}}(t)  \}_{t=0}^{m_{\mathrm{ff}}-1}$ are the impulse response coefficients of the FIR prefilter.
\begin{figure}[htbp]
    \begin{center} \vspace{2mm}
        \includegraphics[width=0.86\columnwidth]{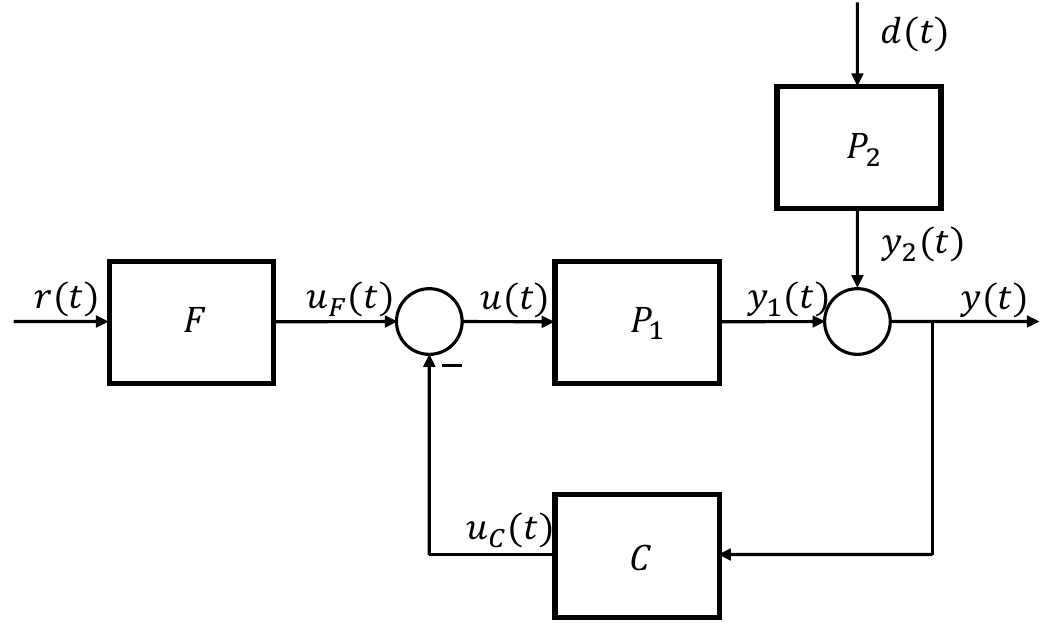} \vspace{-1mm}
        \caption{A block diagram of a closed-loop system considering disturbance dynamics. A 2DOF controller is used.} 
        \label{fig:blockdiag2DOF_P1P2}
    \end{center} \vspace{-6mm}
\end{figure}

For an arbitrary pair $(\mathbf{r}, \mathbf{d})$, we aim at tuning the 2DOF iFIR $(F,C)$ of
Figure \ref{fig:blockdiag2DOF_P1P2} as the minimizer of $\min || \mathbf{y} - \mathbf{y}^{*}   ||_{2}^{2}$ where $\mathbf{y}^{*} = \{y^{*}(t) = M_{r}(r)(t) + M_{d}(d)(t)\}_{t=0}^{N-1}$.
That is,
\begin{equation} \label{eq:obj2DOF_P1P2}
    \min_{F(z),C(z)} \frac{1}{N} \sum_{t=0}^{N-1} \left(  u(t) - F\left(M_{r}^{-1} (y - M_{d}(d) )\right)(t) + C(y)(t)   \right)^{2}
\end{equation}
where $M_{r}^{-1}(z)$ in \eqref{eq:obj1DOF_P1}, \eqref{eq:obj1DOF_P1P2} and \eqref{eq:obj2DOF_P1P2} can be realized by choosing $M_{r}(z)$ as a strictly proper transfer function, or by a causal approximation of $M_{r}(z)$, or by filtering the data in \eqref{eq:obj1DOF_P1}, \eqref{eq:obj1DOF_P1P2} and \eqref{eq:obj2DOF_P1P2} with an additional pre-filter. For example, filtering the data in \eqref{eq:obj2DOF_P1P2} with $M_{r}(z)$ leads to 
\begin{equation} \label{eq:obj2DOF_P1P2_filter}
    \min_{F(z),C(z)} \frac{1}{N} \sum_{t=0}^{N-1} \left(  M_{r}(u)(t) - F(y - M_{d}(d) )(t) + C(M_{r}(y))(t)   \right)^{2}.
\end{equation}

The restriction of $F$ and $C$ to FIR and iFIR controllers, respectively, allows us to
formulate \eqref{eq:obj1DOF_P1}, \eqref{eq:obj1DOF_P1P2}, \eqref{eq:obj2DOF_P1P2}, and \eqref{eq:obj2DOF_P1P2_filter} as least-squares problems in the unknowns $\gamma$, $\mathbf{g}^{\mathrm{fb}}$, and $\mathbf{g}^{\mathrm{ff}}$.
These can be solved efficiently, as shown in \cite[Sec. II]{bib:Wang2024} for \eqref{eq:obj1DOF_P1}. The other cases can be derived in a similar way.
The 2DOF structure of Figure \ref{fig:blockdiag2DOF_P1P2} has enough degrees of
freedom to achieve perfect matching \eqref{eq:perfect_matching_2dof} but is limited
by the restriction to the FIR and iFIR controllers in $F$ and $C$. In the tradition of VRFT design, 
the minimizer of \eqref{eq:obj2DOF_P1P2} should correspond to the minimizer of
\begin{equation} \label{eq:mrc2DOF}
    \min_{F(z), C(z)} \left \| M_r(z) - \frac{P_{1}(z)F(z)}{1\!+\!P_{1}(z)C(z)}  \right\|_{2}^{2} \!\!\!+  \left \| M_d(z) - \frac{P_{2}(z)}{1\!+\!P_{1}(z)C(z)}  \right\|_{2}^{2}\!\!.
\end{equation}
However, since this paper focuses on dissipativity, we do not explore this connection in detail. This will be the subject of future work. The numerical results in Section \ref{sec:gripper} further suggest that this equivalence holds.

In the following, we denote the controller structure in Figure \ref{fig:blockdiag1DOF_P1} and \ref{fig:blockdiag1DOF_P1P2} as 1DOF iFIR and denote the 2DOF structure (prefilter + feedback controller) in Figure \ref{fig:blockdiag2DOF_P1P2} as 2DOF iFIR. We simply say 1DOF/2DOF iFIR when we refer to either 1DOF iFIR or 2DOF iFIR. iFIR controller $C$ always refers to \eqref{eq:ifir} and is
used in both 1DOF iFIR and 2DOF iFIR.

\section{Design of dissipative 1DOF/2DOF iFIR} 
\label{sec:core_contribution}
\subsection{Stability of dissipative systems} \label{sec:dissipativityDefi}
The data-driven designs in \eqref{eq:obj1DOF_P1} and \eqref{eq:obj2DOF_P1P2} do not guarantee closed-loop stability. When we have the prior knowledge of the dissipativity condition of a plant (e.g., through estimation \cite{bib:Koch2022,bib:Martin2023c}), the interconnection theorems of dissipative systems \cite[Chp. 10]{bib:Khalil1996} guarantee closed-loop stability if the feedback controller satisfies a certain dissipativity condition. This implies that we can have a stability-guaranteed data-driven design by solving \eqref{eq:obj1DOF_P1} and \eqref{eq:obj2DOF_P1P2} with dissipativity constraints on the 1DOF/2DOF iFIR. 

First, the concept of dissipativity will be defined in an input-output framework for SISO systems, adapting \cite[Def. 2]{bib:HillANDMoylan1980}, \cite[App. C.2]{bib:GoodwinANDSin1984}. 
We will focus on two classical dissipativity properties, the passivity condition and the gain condition, adapting the notation of \cite[Def. 2.12]{bib:Rodolphe1996}. In what follows, $T_{+} = \{0,1,2,...\}$ is a set of time instants, $\ell_{2}(T_{+})$ is the Hilbert space of sequences $u:T_{+} \to \mathbb{R}$, and $\ell_{2_{e}}(T_{+})$ is the extended $\ell_{2}(T_{+})$.

\begin{definition}
Consider any operator $\mathcal{G}: \ell_{2_{e}}(T_{+}) \to \ell_{2_{e}}(T_{+})$ with input $u$
and output $y$. 

\noindent\textbullet~ $\mathcal{G}$ is $\nu\in\mathbb{R}$ input feedforward passive (IFP) and $\rho\in\mathbb{R}$ output feedback passive (OFP) if 
    \begin{equation} \label{eq:supply_passivity} 
       \sum_{t=0}^{t_f} \begin{bmatrix}
            y(t) \\ u(t)
        \end{bmatrix}^{T} \begin{bmatrix}
            -\rho & \frac{1}{2} \\ \frac{1}{2} & -\nu 
        \end{bmatrix}\begin{bmatrix}
            y(t) \\ u(t)
        \end{bmatrix} \geq 0,
    \end{equation}
    for all $u \in \ell_{2_{e}}(T_{+})$ and for all $t_{f} \in T_{+}$. 
    In particular, we say that $\mathcal{G}$ has an excess/shortage of input passivity for positive/negative $\nu$, and has an excess/shortage of output passivity for positive/negative $\rho$. $\mathcal{G}$ is passive for $\nu = \rho = 0$. 

\noindent\textbullet~$\mathcal{G}$ has a $\ell_{2}$ gain less than or equal to $0 \leq  \alpha\in \mathbb{R}$ if  
    \begin{equation} \label{eq:supply_gain}
       \sum_{t=0}^{t_f} \begin{bmatrix}
            y(t) \\ u(t)
        \end{bmatrix}^{T} \begin{bmatrix}
            -1 & 0 \\ 0 & \alpha^2
        \end{bmatrix}\begin{bmatrix}
            y(t) \\ u(t)
        \end{bmatrix} \geq 0,
    \end{equation}
    for all $u \in \ell_{2_{e}}(T_{+})$ and for all $t_{f} \in T_{+}$.
     $\hfill\lrcorner$
\end{definition}    

In the rest of the paper, we use the $\nu_{1}$, $\rho_{1}$, $\alpha_{1}$ to denote the dissipativity properties of $P_{1}$, 
such that $P_{1}$ is OFP with $\rho_{1}$, IFP with $\nu_{1}$ and has $\ell_{2}$ gain less than or equal to $\alpha_{1}$. Likewise, we use $\nu_{c}$, $\rho_{c}$, $\alpha_{c}$ to denote the dissipativity properties of the iFIR controller $C$ in \eqref{eq:ifir}, such that $C$ is OFP with $\rho_{c}$, IFP with $\nu_{c}$ and has $\ell_{2}$ gain less than or equal to $\alpha_{c}$. 
We will \emph{assume} that
\begin{itemize}
\item both $P_{1}$ and $P_{2}$ have finite $\ell_{2}$ gains;
\item the passivity condition $\nu_{1},\rho_{1}$ or the gain condition $\alpha_{1}$ are known.
\end{itemize}
We will then design the iFIR controller $C$ to achieve compatible dissipative conditions that 
guarantee closed-loop stability. These conditions are detailed in the following
theorem, which emphasizes three cases that have a simple graphical interpretation.

\begin{theorem} \label{thm:dissipativityCases}
    For the closed loops in Figures \ref{fig:blockdiag1DOF_P1} or \ref{fig:blockdiag2DOF_P1P2}, consider the following dissipativity cases for $P_{1}$ 
    \begin{itemize}
        \item[\textbf{A:}]
        $\nu_{1}$ and $\rho_1$ known. $\alpha_{1}$ unknown. 
        $\nu_{1} \leq 0$, $\nu_{1}\rho_{1} < \frac{1}{4}$;
        \item[\textbf{B:}]
        $\nu_{1}$ and $\rho_{1}$ known. $\alpha_{1}$ unknown. \textcolor{black}{ $\nu_{1} > 0$}; 
        \item[\textbf{C:}]
         $\nu_{1}$ and $\rho_{1}$ unknown. $\alpha_{1}$ known. 
    \end{itemize}  
    Then, the closed loop in Figure \ref{fig:blockdiag1DOF_P1} and the closed loop in Figure \ref{fig:blockdiag2DOF_P1P2} \textcolor{black}{
    have a finite $\ell_{2}$ gain from $r$ to $y$}, if the iFIR controller $C$ satisfies, respectively,
    \begin{itemize}
       \item[\textbf{A:}] $\rho_{c} = -\nu_{1} + \epsilon_{1}, \nu_{c} = -\rho_{1} + \epsilon_{2}$;     
       \item[\textbf{B:}]
        $\rho_{c} = 0, \nu_{c} = -\rho_{1} + \epsilon_{2}$; 
        \item[\textbf{C:}] 
        $\alpha_{c} = \frac{1}{\alpha_{1}}-\epsilon_{3}$. 
    \end{itemize}
    where $\epsilon_{1},\epsilon_{2},\epsilon_{3}$ are (small) positive constants.   $\hfill\lrcorner$
\end{theorem}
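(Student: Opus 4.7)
The plan is to invoke the classical QSR-dissipativity interconnection theorems — the passivity theorem for cases A and B and the small-gain theorem for case C — on the inner feedback loop between $P_1$ and $C$. In the 2DOF setting of Figure~\ref{fig:blockdiag2DOF_P1P2}, the prefilter $F$ is a FIR filter, hence BIBO stable, and $P_2$ has finite $\ell_2$ gain by assumption; stability of the full loop therefore reduces to stability of the $P_1$–$C$ closed loop already present in Figure~\ref{fig:blockdiag1DOF_P1}.

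First I would fix signal conventions for the negative feedback and substitute the loop equations into the dissipation inequalities \eqref{eq:supply_passivity} for $P_1$ with parameters $(\nu_1,\rho_1)$ and for $C$ with parameters $(\nu_c,\rho_c)$. Summing the two inequalities makes the cross terms $y_1 u_1$ and $y_c u_c$ telescope, leaving a quadratic form in $(y_1,y_c)$ whose diagonal coefficients are $\rho_1 + \nu_c$ and $\rho_c + \nu_1$ and whose right-hand side is linear in the exogenous references and disturbances. A standard Young/Cauchy--Schwarz bound on those linear terms then yields an $\ell_2$ estimate on the loop signals as soon as the two coefficients are strictly positive.

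The case analysis is then essentially algebra on these coefficients. In case A, the choice $\rho_c = -\nu_1 + \epsilon_1$ and $\nu_c = -\rho_1 + \epsilon_2$ makes the two coefficients equal to $\epsilon_1>0$ and $\epsilon_2>0$; the proviso $\nu_1\rho_1 < 1/4$ is the natural feasibility requirement ensuring that a controller simultaneously IFP with $\nu_c$ and OFP with $\rho_c$ of the requested signs can exist at all. In case B, $\nu_1 \ge 0$ together with $\rho_c = 0$ gives $\rho_c + \nu_1 \ge 0$, and combined with the strict $\rho_1 + \nu_c = \epsilon_2 > 0$ and the finite $\ell_2$ gain of $P_1$, this closes the $\ell_2$ estimate by a variant of the output-strict passivity theorem. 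In case C, the small-gain theorem applies directly, since $\alpha_1 \alpha_c = 1 - \alpha_1 \epsilon_3 < 1$.

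The main obstacle I expect is not the interconnection argument itself, which is textbook, but the fact that the iFIR controller \eqref{eq:ifir} contains an integrator and is therefore not $\ell_2$-stable in isolation. One has to verify that the supply-rate inequalities \eqref{eq:supply_passivity}--\eqref{eq:supply_gain} are genuinely satisfied by \eqref{eq:ifir} with the claimed $\nu_c,\rho_c,\alpha_c$; this is exactly what the frequency-domain dissipativity constraints imposed on $\gamma$ and $\mathbf{g}^{\mathrm{fb}}$ in Section~\ref{sec:dissi_constraints} will be designed to enforce. Once those hold, the closed-loop $\ell_2$ stability of the loop signals follows from the dissipativity argument even though $C$ alone is unbounded, because the integrator mode is stabilized by the feedback interconnection itself.
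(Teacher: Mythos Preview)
Your proposal is correct and follows essentially the same route as the paper: invoke the passivity theorem for cases \textbf{A} and \textbf{B}, the small-gain theorem for case \textbf{C}, and reduce the 2DOF loop of Figure~\ref{fig:blockdiag2DOF_P1P2} to the inner $P_1$--$C$ loop by using that $F$ (FIR) and $P_2$ have finite $\ell_2$ gain. The paper's proof simply cites \cite[Thm.~10.5, 10.6]{bib:Khalil1996} for the two interconnection results, whereas you sketch their mechanism by summing the two supply-rate inequalities; this is the same argument at a finer level of detail. Your treatment of case \textbf{B} is in fact slightly more careful than the paper's: with $\rho_c=0$ and $\nu_1\geq 0$ one only gets $\nu_1+\rho_c\geq 0$, not the strict inequality the paper asserts, and your fallback to the finite $\ell_2$ gain of $P_1$ is the right patch for the boundary case $\nu_1=0$. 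Your remark on the integrator is also pertinent but lies outside the scope of Theorem~\ref{thm:dissipativityCases} itself; the paper addresses it immediately afterwards by noting that cases \textbf{A} and \textbf{C} force $\gamma=0$.
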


\begin{proof}
    We focus on the closed loop of Figure \ref{fig:blockdiag1DOF_P1}.
    
    For cases $\textbf{A}$ and $\textbf{B}$ we have 
    \begin{equation} \label{eq:NuRho}
            \nu_{c} + \rho_{1} >0 \qquad
            \nu_{1} + \rho_{c} > 0.
    \end{equation}
    By the passivity theorem {\cite[Thm. 10.6]{bib:Khalil1996} }, \eqref{eq:NuRho} guarantees that the closed loop \textcolor{black}{
    has a finite $\ell_{2}$ gain from $r$ to $y$}.
    
    For case $\textbf{C}$ we have 
    \begin{equation} \label{eq:NuRho_smallgain}
        \alpha_{1}\alpha_{c} < 1.
    \end{equation}
    By the small-gain theorem {\cite[Thm. 10.5]{bib:Khalil1996} }, \eqref{eq:NuRho} guarantees that the closed loop \textcolor{black}{
    has a finite $\ell_{2}$ gain from $r$ to $y$}.

    For the closed loop in Figure \ref{fig:blockdiag2DOF_P1P2}, 
    both \eqref{eq:NuRho} and \eqref{eq:NuRho_smallgain} imply that there is a finite $\ell_{2}$ gain from $(u_{F},y_{2})$ to $(y_{1},u_{C})$.  Since $F$ has a finite $\ell_{2}$ gain by construction, and $P_{2}$ has a finite $\ell_{2}$ gain by assumption, there is a finite $\ell_{2}$ gain from $(r,d)$ to $(y_{1},u_{C})$. Since $y = y_{1} + y_{2}$, there is a finite $\ell_{2}$ gain from $(r,d)$ to $y$.
\end{proof}

All these cases have a convenient graphical interpretation \cite[Sec. 3]{bib:Willems1972}, \cite[Lemma. C.3.2]{bib:GoodwinANDSin1984}, illustrated in Figure \ref{fig:nyplots_passivity}.
The Nyquist diagram of the iFIR controller $C$
 \begin{itemize}
    \item[\textbf{A:}] lies inside the closed
    disk centred at $\frac{1}{2\rho_{c}}$ with radius $\frac{1}{2\rho_{c}} \sqrt{1-4\nu_{c} \rho_{c} }$;
    \item[\textbf{B:}] lies in the infinite set of the complex plane whose elements
    have real part greater than or equal to $\nu_{c}$;
    \item[\textbf{C:}] lies inside the disk with radius $\alpha_c$ centred at zero.  
\end{itemize}

\begin{figure}[htbp]
    \begin{center}
        \includegraphics[width=0.96\columnwidth]{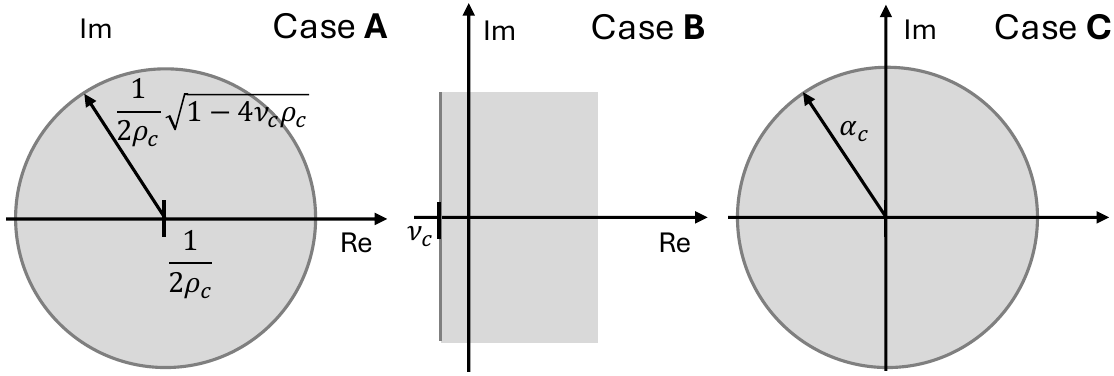}
        \caption{Graphical interpretation for Cases \textbf{A}-\textbf{C} in Theorem \ref{thm:dissipativityCases}. The shaded areas represent the regions where
        the Nyquist diagram of iFIR controller $C$ must lie.}
        \label{fig:nyplots_passivity} 
    \end{center} \vspace{-4mm}
\end{figure}

Note that if $\nu_{1}\rho_{1} < \frac{1}{4}$ of Case \textbf{A} of Theorem \ref{thm:dissipativityCases} is violated, then the radius $\frac{1}{2\rho_{c}} \sqrt{1-4\nu_{c} \rho_{c} }$ is not defined (imaginary number). This provides a justification for the condition
$\nu_{1}\rho_{1} < \frac{1}{4}$ of Case \textbf{A}.

\textcolor{black}{
Cases \textbf{A} and \textbf{C} are feasible only if the integrator is removed by the iFIR ($\gamma = 0$), as they both require finite $\ell_2$ gain.
} For simplicity of the exposition, in what follows we will not insist on this distinction.

\subsection{Dissipativity constraints for 1DOF/2DOF iFIR} \label{sec:dissi_constraints}
In this section, we illustrate how to combine the least-squares optimization 
of Section \ref{sec:vrft} with dissipativity constraints on the iFIR controller $C$. The goal is to guarantee that the controller $C$ is the best dissipative controller that fits the data and
satisfies the conditions of Theorem \ref{thm:dissipativityCases}. 

The first formulation makes use of the KYP lemma and is based on the following
state-space realization of FIR part of iFIR controller $C$
\begin{equation}  \label{eq:ifirStatespace}
    \begin{cases}
        \quad x(t+1) = A_{c}x(t) + B_{c}w(t) \\
        \quad q(t) = C_{c}x(t) + D_{c}w(t)
    \end{cases}
\end{equation}  
 where $w(t),q(t)$ are its input and output at the time instant $t$. $A_{c}\in \mathbb{R}^{m_{\mathrm{fb}-1} \times m_{\mathrm{fb}-1}}, B_{c}  \in \mathbb{R}^{m_{\mathrm{fb}-1}} ,C_{c} \in \mathbb{R}^{1 \times m_{\mathrm{fb}-1}},D_{c} \in \mathbb{R}$
are given by
\begin{align} \label{eq:ifirStatespace_matrices}
    A_{c} &= \begin{bmatrix}
        0 & 1 & 0 &0 &\dots &0 & 0\\
        0 & 0 & 1 &0 &\dots &0 & 0\\
        \vdots & & & & & \\
        0 & 0 & 0 &0 &\dots &1 & 0 \\
        0 & 0 & 0 &\dots & &0 & 1 \\
        0 & 0 & 0 &\dots & &0 & 0 
    \end{bmatrix} , \quad  
    B_{c} = \begin{bmatrix}
        0\\
        0  \\
        \vdots \\
        0 \\ 
        0\\
        1
    \end{bmatrix} \nonumber \\
    C_{c} &= 
    \begin{bmatrix}
        g^{\mathrm{fb}}(m_{\mathrm{fb}}-1) & g^{\mathrm{fb}}(m_{\mathrm{fb}}-2) & \dots & g^{\mathrm{fb}}(1) 
    \end{bmatrix} \nonumber \\
    D_{c} &= g^{\mathrm{fb}}(0) \, . 
\end{align}
With this representation, the dissipativity constraints correspond
to the following LMIs.

\begin{theorem} \label{thm:kyp}
Under the assumptions of Theorem \ref{thm:dissipativityCases}, 
iFIR controller $C$ whose FIR part is represented by \eqref{eq:ifirStatespace},\eqref{eq:ifirStatespace_matrices} 
satisfies \textbf{A}, \textbf{B}, \textbf{C}
in Theorem \ref{thm:dissipativityCases} 
 if there exist $  0 < X=X^T \in \mathbb{R}^{m_{\mathrm{fb}-1} \times m_{\mathrm{fb}-1}}$,
$C_c$, and $D_c$ such that, respectively,
\begin{subequations}
\label{eq:kyp}
\begin{align}
\mbox{\textbf{A:}} &
        \begin{bmatrix}
        \!A_{c}^{T}XA_{c}- X & A_{c}^{T}XB_{c} - \frac{C_{c}^T}{2} & C_{c}^{T}(\varepsilon_1\!-\!\nu_1)^{\frac{1}{2}} \\
        \!B_{c}^{T}XA_{c} \!-\! \frac{C_{c}}{2} \!\!&\! \!B_{c}^{T}XB_{c}\!-\!D_{c} \!\!+\!\! (\varepsilon_2\!-\!\rho_1) \!\!&\!\! D_{c}(\varepsilon_1\!-\!\nu_1)^{\frac{1}{2}} \\
        \!(\varepsilon_1\!-\!\nu_1)^{\frac{1}{2}}C_{c} & (\varepsilon_1\!-\!\nu_1)^{\frac{1}{2}}D_{c} &  - I
        \end{bmatrix} \!\!<\! 0,  \nonumber \\
        & \gamma = 0; \label{eq:kypA} \\
\mbox{\textbf{B:}} &
        \begin{bmatrix}
        A_{c}^{T}XA_{c}- X & A_{c}^{T}XB_{c} - \frac{C_{c}^T}{2}  \\
        B_{c}^{T}XA_{c} - \frac{C_{c}}{2} & B_{c}^{T}XB_{c}-D_{c} + (\varepsilon_2-\rho_1) 
        \end{bmatrix} < 0 \,,\ \gamma \geq 0; \label{eq:kypB} \\
\mbox{\textbf{C:}} &
       \begin{bmatrix}
        A_{c}^{T}XA_{c}- X & A_{c}^{T}XB_{c}  & C_{c}^{T} \\
        B_{c}^{T}XA_{c}  & B_{c}^{T}XB_{c} - (\frac{1}{\alpha_{1}}-\epsilon_{3})^{2} & D_{c} \\
       C_{c} & D_{c} &  - I
        \end{bmatrix} < 0 \,,\ \gamma = 0. \label{eq:kypC}
\end{align} \vspace{-5mm}
\end{subequations}

$\hfill\lrcorner$
\end{theorem}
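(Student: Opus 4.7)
The plan is to invoke the discrete-time Kalman--Yakubovich--Popov (KYP) lemma on the state-space realization \eqref{eq:ifirStatespace}--\eqref{eq:ifirStatespace_matrices} of the FIR part of the iFIR controller, and then use a Schur complement to linearize the quadratic dependence on $(C_c,D_c)$ that appears in Cases \textbf{A} and \textbf{C}. The integrator in \eqref{eq:ifir} is handled separately by a parallel composition argument.

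Concretely, for $(A_c,B_c,C_c,D_c)$ with quadratic storage $V(x)=x^{T}Xx$, $X>0$, the pointwise dissipation inequality $V(x(t+1))-V(x(t)) \leq \sigma(w(t),q(t))$ with supply rate $\sigma(w,q)=\bigl[\begin{smallmatrix}q \\ w\end{smallmatrix}\bigr]^{T} Q \bigl[\begin{smallmatrix}q \\ w\end{smallmatrix}\bigr]$ is implied along every trajectory by
\[
\begin{bmatrix} A_c^{T} X A_c - X & A_c^{T} X B_c \\ B_c^{T} X A_c & B_c^{T} X B_c \end{bmatrix} - \begin{bmatrix} C_c & D_c \\ 0 & 1 \end{bmatrix}^{\!T} Q \begin{bmatrix} C_c & D_c \\ 0 & 1 \end{bmatrix} < 0,
\]
which upon summation from zero initial state yields the dissipativity definition used in Theorem \ref{thm:dissipativityCases}. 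For Case \textbf{B} the supply rate corresponds to $Q=\bigl[\begin{smallmatrix}0 & 1/2 \\ 1/2 & \rho_1-\epsilon_2\end{smallmatrix}\bigr]$, and direct expansion recovers \eqref{eq:kypB} with no further manipulation. For Case \textbf{A}, the supply rate adds a rank-one term $(\epsilon_1-\nu_1)\bigl[\begin{smallmatrix}C_c^{T} \\ D_c\end{smallmatrix}\bigr]\bigl[C_c \;\; D_c\bigr]$ to the KYP matrix; since $\epsilon_1-\nu_1>0$ under the hypothesis $\nu_1\leq 0$, a Schur complement against a $-I$ block absorbs this rank-one piece into a new third block-row and block-column, with the scaling $(\epsilon_1-\nu_1)^{1/2}$ being the square root of the coefficient. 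This reproduces \eqref{eq:kypA} exactly. Case \textbf{C} is formally identical, with $Q=\mathrm{diag}(-1,\alpha_c^2)$ and $\alpha_c=1/\alpha_1-\epsilon_3$; the same Schur step yields \eqref{eq:kypC}.

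It remains to reintroduce the integrator $\gamma T_s/(1-z^{-1})$. In discrete time this integrator is input strictly passive, with Nyquist plot lying on the vertical line $\mathrm{Re}=\gamma T_s/2$, so for Case \textbf{B} and $\gamma\geq 0$ the parallel interconnection of the integrator and the FIR inherits the FIR's IFP bound $\epsilon_2-\rho_1$ (the integrator only adds to IFP) and keeps the OFP value $\rho_c=0$ (plain passivity is closed under parallel composition). For Cases \textbf{A} and \textbf{C} the iFIR must have a finite $\ell_2$ gain, which an integrator precludes, and this forces the constraint $\gamma=0$ stated alongside \eqref{eq:kypA} and \eqref{eq:kypC}.

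The main obstacle I anticipate is the algebraic bookkeeping in Case \textbf{A}: the supply rate depends on both $\nu_1$ and $\rho_1$, and one must carefully pair the $(\epsilon_1-\nu_1)$ factor with the outer product $\bigl[\begin{smallmatrix}C_c^{T} \\ D_c\end{smallmatrix}\bigr]\bigl[C_c \;\; D_c\bigr]$ so that the square root $(\epsilon_1-\nu_1)^{1/2}$ lands symmetrically in the third block row and column of \eqref{eq:kypA}. Once this Schur manipulation is in place, Case \textbf{B} is a specialization (the $(1,1)$ entry of the supply matrix is zero so no Schur step is needed) and Case \textbf{C} is the same calculation applied to a gain supply rate.
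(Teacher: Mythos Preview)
Your proposal is correct and follows the same route as the paper: apply the discrete-time KYP lemma to the FIR realization with the appropriate $(Q,S,R)$ supply matrix, and in Cases \textbf{A} and \textbf{C} use a Schur complement against the $-I$ block to linearize the quadratic term in $(C_c,D_c)$. Your explicit parallel-composition argument for the integrator in Case \textbf{B} is in fact more detailed than the paper's own proof, which simply asserts the conclusion once $\gamma\geq 0$ is recorded.
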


Combining the constraints of Theorem \ref{thm:kyp} with \eqref{eq:obj1DOF_P1} or \eqref{eq:obj2DOF_P1P2}
leads to a convex constrained optimization problem that can be
solved using standard solvers such as CVXPY \cite{bib:cvxpy}. 
The minimizer is a dissipative 1DOF/2DOF iFIR that best solves the data-fit problem in \eqref{eq:obj1DOF_P1}, \eqref{eq:obj2DOF_P1P2}. 
However, as shown in \cite[Sec. IV]{bib:Wang2024}, constrained optimization based on LMIs of the form \eqref{eq:kyp} has scalability issues. These LMIs are not computationally tractable for high-order iFIR controllers, as the coefficients of the unknown matrix $X$ grow quadratically with the order of the controller. This motivates the development of the following dissipativity constraints in the frequency domain.

\begin{theorem} \label{thm:sample}
Given the iFIR controller $C$ represented by \eqref{eq:ifir},
define 
\begin{subequations}
\label{eq:f_expressions}
\begin{align}
f_{r}(\theta) &= \sum_{t=0}^{m_{\mathrm{fb}}-1}g^{\mathrm{fb}}(t)\cos \left(t\theta\right) , \quad \forall \theta \in [0,\pi] \label{eq:fr} \\
f_{i}(\theta) &= \sum_{t=0}^{m_{\mathrm{fb}}-1}g^{\mathrm{fb}}(t)\sin \left(t\theta\right) , \quad \forall \theta \in [0,\pi]. \label{eq:fi}
\end{align}
\end{subequations}
For any selection of the \emph{sampling parameter} $2\leq M \in \mathbb{N}$
and of the \emph{decay rate parameters} $h_0>0$ and $0<h \leq 1$,
 define
\begin{equation}
    \epsilon = \pi h_{0}\frac{1-h^{m_{\mathrm{fb}}}}{1-h}\frac{m_{\mathrm{fb}}-1}{2M}. \label{eq:epsilon}
\end{equation}

Under the assumptions of Theorem \ref{thm:dissipativityCases}, 
the iFIR controller $C$ satisfies \textbf{A}, \textbf{B}, \textbf{C}
in Theorem \ref{thm:dissipativityCases} if
\begin{equation}
|g^{\mathrm{fb}}(t)| \leq h_0 h^t  \quad \forall t \in \{0,1,...,m_{\mathrm{fb}}-1\} \label{eq:sampleDecay}\\
\end{equation}
and, respectively, \vspace{2mm}

\noindent\textbf{A:} \vspace{-11mm}
\begin{subequations}
\label{eq:freq_conditions_A}
    
        \begin{align}
        \gamma &= 0 \label{eq:sampleGamma}\\
          a_{1} \!+\! \epsilon  &< f_{r}\left(\theta \!=\! \frac{m\pi}{M}\right) < a_{2} \!-\! \epsilon, \ \forall m \!\in\! \{0,1,...,M\} \label{eq:sample_fr}\\
          \frac{-r}{\sqrt{2}} \!+\! \epsilon  &< f_{i}\left(\theta \!=\! \frac{m\pi}{M}\right) < \frac{r}{\sqrt{2}} \!-\!\epsilon, \ \forall m \!\in\! \{0,1,...,M\}  \label{eq:sample_fi}
        \end{align}
    where
    \begin{align}
        a_{k} &= \frac{1}{2(-\nu_{1}+\epsilon_{1})} + \frac{(-1)^k }{\sqrt{2}}r \label{eq:aCaseA}\\
        r &=\frac{1}{2(-\nu_{1}+\epsilon_{1})}\sqrt{1-4(-\nu_{1}+\epsilon_{1})(-\rho_{1}+\epsilon_{2})};
        \label{eq:rCaseA}
    \end{align}
\end{subequations} \vspace{2mm}

\noindent\textbf{B:} \vspace{-7mm}
    \begin{subequations} \label{eq:sampleRHP}
    \label{eq:freq_conditions_B} 
        \begin{align}
            \gamma &\geq 0 \label{eq:sampleRHPgamma}\\
                f_{r}\left(\theta = \frac{m\pi}{M}\right) &> (\epsilon_{2}- \rho_{1}) + \epsilon, \quad \forall m \in \{0,1,...,M\}; \label{eq:sampleRHP_fr}
        \end{align} 
    \end{subequations} \vspace{2mm}
    
\noindent\textbf{C:}  \vspace{-7mm}
    \begin{subequations}
    \label{eq:freq_conditions_C} 
        \begin{align}
            \gamma &= 0 \label{eq:sampled_gamma2} \\
              -r + \epsilon  &< f_{r}\left(\theta = \frac{m\pi}{M}\right) <r  - \epsilon, 
              \quad \forall m \in \{0,1,...,M\} \label{eq:sampled_gain_real}\\
              -r+ \epsilon  &< f_{i}\left(\theta = \frac{m\pi}{M}\right) < r -\epsilon, \quad \forall m \in \{0,1,...,M\} 
              \label{eq:sampled_gain_imag}
        \end{align}
    where
    \begin{align}
    r = \frac{1}{\sqrt{2}}\left(\frac{1}{\alpha_1}-\varepsilon_3\right). 
    \end{align}
    \end{subequations}
\end{theorem}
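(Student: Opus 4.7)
The plan is to reduce Theorem \ref{thm:sample} to the graphical characterization of dissipativity stated after Theorem \ref{thm:dissipativityCases}. Concretely, I would show that the sampled frequency-domain inequalities together with the decay bound \eqref{eq:sampleDecay} force the whole continuous Nyquist curve $C(e^{j\theta})$, $\theta\in[-\pi,\pi]$, to lie in the required feasible region (a disk for Case \textbf{A}, a right half-plane for Case \textbf{B}, a disk centred at the origin for Case \textbf{C}). Because $f_r$ is even and $f_i$ is odd in $\theta$, the piece of the curve for $\theta\in[-\pi,0)$ is the complex conjugate of the piece for $\theta\in(0,\pi]$, and each target region is symmetric about the real axis, so it suffices to treat $\theta\in[0,\pi]$.

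The quantitative heart of the proof is a Lipschitz-type bridge from the finite set of sampled inequalities to a continuous one. Differentiating \eqref{eq:f_expressions} and invoking \eqref{eq:sampleDecay} yields
\begin{equation*}
\max_{\theta\in[0,\pi]}\bigl(|f_r'(\theta)|,\,|f_i'(\theta)|\bigr)\;\le\;\sum_{t=0}^{m_{\mathrm{fb}}-1} t\,h_0 h^t\;\le\; h_0(m_{\mathrm{fb}}-1)\,\frac{1-h^{m_{\mathrm{fb}}}}{1-h}.
\end{equation*}
Since the samples $\theta_m = m\pi/M$ form a uniform $\pi/M$-grid on $[0,\pi]$, every $\theta$ lies within $\pi/(2M)$ of some $\theta_m$, and the mean value theorem yields $|f_r(\theta)-f_r(\theta_m)|\le\epsilon$ and $|f_i(\theta)-f_i(\theta_m)|\le\epsilon$ with exactly the $\epsilon$ of \eqref{eq:epsilon}. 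Combining this with the margined sampled bounds gives the desired continuous constraint on $(f_r(\theta),f_i(\theta))$ for every $\theta\in[0,\pi]$.

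The remaining step consists of three case-specific geometric verifications. In Case \textbf{A}, \eqref{eq:sample_fr}--\eqref{eq:sample_fi} combined with the Lipschitz step place $(f_r(\theta),f_i(\theta))$ inside the axis-aligned square with corners $(c\pm r/\sqrt{2},\,\pm r/\sqrt{2})$, where $c=1/(2(-\nu_1+\epsilon_1))$; this is the square inscribed in the Case \textbf{A} target disk, so the Nyquist plot of the FIR part lies in that disk, and \eqref{eq:sampleGamma} kills the integrator so the Nyquist plot of $C$ coincides with it. Case \textbf{C} is formally identical with $c=0$ and radius $\alpha_c = 1/\alpha_1 - \varepsilon_3$. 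For Case \textbf{B}, the integrator $\gamma T_s/(1-e^{-j\theta})$ has real part $\gamma T_s/2\ge 0$ whenever $\gamma\ge 0$, so \eqref{eq:sampleRHP_fr} with the margin yields $\mathrm{Re}\,C(e^{j\theta})>-\rho_1+\epsilon_2$ on $(0,\pi]$, and the inequality extends at $\theta=0$ because the integrator's real part blows up to $+\infty$. The main obstacle I anticipate is keeping the inscribed-square containment argument airtight while tracking the strict-versus-non-strict inequalities in \eqref{eq:freq_conditions_A} and \eqref{eq:freq_conditions_C}, and formally treating the singularity of the integrator at $\theta=0$ in Case \textbf{B}. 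Everything else is routine: derivative bounds, the mean value theorem, and disk-inscribed-square geometry.
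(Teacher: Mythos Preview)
Your proposal is correct and follows essentially the same route as the paper: a Lipschitz bound on $f_r,f_i$ derived from the decay constraint \eqref{eq:sampleDecay} upgrades the sampled inequalities to continuous ones with the margin $\epsilon$ of \eqref{eq:epsilon}, and then the resulting box is the inscribed square of the relevant disk (Cases \textbf{A}, \textbf{C}) or the shifted half-plane (Case \textbf{B}). The only cosmetic differences are that the paper obtains the Lipschitz estimate via $|\sin a-\sin b|\le|a-b|$ rather than differentiation plus the mean value theorem, and it handles the integrator in Case \textbf{B} by simply noting that its Nyquist locus has nonnegative real part, whereas you compute that real part explicitly as $\gamma T_s/2$; both lead to the same conclusion.
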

\vspace{2mm}

Although the conditions of Theorem \ref{thm:sample} may appear complicated and not intuitive
at first sight, the conditions \eqref{eq:freq_conditions_A}, \eqref{eq:freq_conditions_B}, and \eqref{eq:freq_conditions_C} characterize three sets of linear inequalities in the coefficients $\mathbf{g}^{\mathrm{fb}}$ of the iFIR controller. The number of these inequalities
is proportional to the size of the controller $m_{\mathrm{fb}}$ and the sampling parameter $M$,
which guarantees scalability when paired to \eqref{eq:obj1DOF_P1} or \eqref{eq:obj2DOF_P1P2}.

The complexity of Theorem \ref{thm:sample} is also only apparent. Looking at the frequency domain, the linear inequalities \eqref{eq:freq_conditions_A} guarantee that the Nyquist plot of the iFIR controller is confined within the blue box of Figure \ref{fig:box}. 
Given the graphical interpretation of passivity and small gain properties in Figure \ref{fig:nyplots_passivity}, it should be intuitively clear why these inequalities
provide sufficient conditions for Theorem \ref{thm:dissipativityCases}. The interpretation of
\eqref{eq:freq_conditions_B} and \eqref{eq:freq_conditions_C} is similar.

\begin{figure}[htbp]
    \begin{center}
    \vspace{2mm}
        \includegraphics[width=0.56\columnwidth]{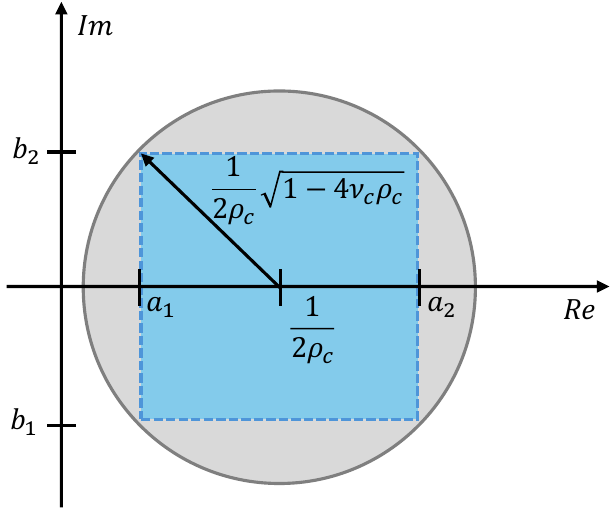}
        \caption{Equation \eqref{eq:freq_conditions_A} constrains the Nyquist diagram of the controller $C$ within the blue box, to satisfy the passivity property given by the grey disk. 
        } 
        \label{fig:box}
    \end{center}
        \vspace{-5mm}
\end{figure}

The box constraint in Figure \ref{fig:box} is clearly a conservative approximation of the disk. This can be mitigated using more complex polygons, such as an octagon. More complex polygons can be characterized through a large set of linear inequalities (whose number remains proportional
to $m_{\mathrm{fb}}$ and the sampling parameter $M$). This will be the subject of future work. 

Theorem \ref{thm:sample} constrains the iFIR controller $C$ by enforcing
inequalities on $M$ samples of its Nyquist locus. This is why all these conditions have an additional correction factor $\epsilon$, defined in \eqref{eq:epsilon}. The magnitude of 
$\epsilon$ is related to the variation of the Nyquist
locus between samples and guarantees that the constrained Nyquist samples are representative of the whole Nyquis locus. In fact, $\epsilon$ gets smaller as the number of samples $M$ increases.
In general, \eqref{eq:epsilon} is a conservative estimation and numerical examples show
that a much smaller $\epsilon$ is often enough to guarantee the desired property of the controller. 

Finally, we observe that the VRFT design \eqref{eq:obj1DOF_P1} and \eqref{eq:obj2DOF_P1P2}
combined with the dissipativity constraints in Theorem \ref{thm:kyp} or Theorem \ref{thm:sample} leads
to a data-driven design approach with stability guarantees. \textcolor{black}{
Noisy measures, insufficient data and nonlinearities} could affect the quality of the achieved performance but cannot compromise the stability of the closed loop, which is guaranteed by passivity or small gain theorems. The cost of such a strong guarantee is that the dissipativity constraints proposed in this section do impose conservativeness on the achievable performance of 1DOF/2DOF iFIR if the ideal controller does not satisfy the dissipativity constraints. However, this trade-off between performance (the objective functions in \eqref{eq:obj1DOF_P1}, \eqref{eq:obj2DOF_P1P2}) and stability (Theorem \ref{thm:kyp} or Theorem \ref{thm:sample}) is optimally handled, via convex optimization.

\subsection{Proofs} \label{sec:proof}
\subsubsection*{Proof of Theorem \ref{thm:kyp}}
We show that \eqref{eq:kypA}, \eqref{eq:kypB}, \eqref{eq:kypC} are equivalent to the standard KYP lemma in {\cite[Chp. 4.6.2]{bib:CaverlyANDForbes2024}}, and hence the iFIR controller $C$ satisfies \eqref{eq:supply_passivity} or \eqref{eq:supply_gain} with $\rho_{c},\nu_{c},\alpha_{c}$ specified in Theorem \ref{thm:dissipativityCases}. 

Let \eqref{eq:kypA} be partitioned as 
\begin{equation} \label{eq:partition}
    \begin{bmatrix}
        \bar{A} & \bar{B}\\
        \bar{B}^T & \bar{C} 
\end{bmatrix} > 0
\end{equation}
where 
$$
\bar{A} = -\begin{bmatrix}
        \!A_{c}^{T}XA_{c}- X & A_{c}^{T}XB_{c} - \frac{C_{c}^T}{2}\\
        \!B_{c}^{T}XA_{c} \!-\! \frac{C_{c}}{2} & B_{c}^{T}XB_{c}\!-\!D_{c} \!\!+\!\! (\varepsilon_2\!-\!\rho_1) 
\end{bmatrix} 
$$
$$
\bar{B} = -\begin{bmatrix}
        C_{c}^{T}(\varepsilon_1\!-\!\nu_1)^{\frac{1}{2}}\\
        D_{c}(\varepsilon_1\!-\!\nu_1)^{\frac{1}{2}}
\end{bmatrix} 
\ \mbox{ and } \
\bar{C} = I. 
$$
Then by Schur complement, \eqref{eq:kypA} is equivalent to {
\begin{equation}
    \bar{C} > 0
    \ \mbox{ and } \
    \bar{A}- \bar{B}\bar{C}^{-1}\bar{B}^T > 0 \, .\label{eq:schur}
\end{equation} }
The second inequality is equivalent to the inequality of the KYP lemma in {\cite[Chp. 4.6.2]{bib:CaverlyANDForbes2024} (readers are also referred to \cite[Lemma. 3]{bib:Kottenstette2014} and \cite[Lemma. C.4.2]{bib:GoodwinANDSin1984})}, if we set 
the matrices $Q,S,R,P$ in  {\cite[Chp. 4.6.2]{bib:CaverlyANDForbes2024} } as $Q = - (\epsilon_{1} - \nu_{1}), S = \frac{1}{2}, R = -( \epsilon_{2} - \rho_{1}), P = X$.
Therefore, we get
\begin{equation} \label{eq:supply_caseA} 
       \sum_{t=0}^{t_f} \begin{bmatrix}
            q(t) \\ w(t)
        \end{bmatrix}^{T} \begin{bmatrix}
            \nu_{1}- \epsilon_{1}  & \frac{1}{2} \\ \frac{1}{2} &  \rho_{1}-\epsilon_{2}
            
        \end{bmatrix}\begin{bmatrix}
            q(t) \\ w(t)
        \end{bmatrix} \geq 0,
    \end{equation}
for all $w \in \ell_{2_{e}}(T_{+})$ and for all $t_{f} \in T_{+}$ \cite{bib:Soh1999}. That is, given $\gamma = 0$, the iFIR controller $C$ satisfies \eqref{eq:supply_passivity} with $\rho_{c} = - \nu_{1}+ \epsilon_{1} , \nu_{c} = -\rho_{1}+ \epsilon_{2}$, as required.

Equation \eqref{eq:kypB} is equivalent to the
inequality of the KYP lemma in {\cite[Chp. 4.6.2]{bib:CaverlyANDForbes2024}} if we set the matrices $Q,S,R,P$ in {\cite[Chp. 4.6.2]{bib:CaverlyANDForbes2024} }as $Q = 0, S = \frac{1}{2}, R = -( \epsilon_{2} - \rho_{1}), P = X$. Hence, since $\gamma \geq 0$, the iFIR controller $C$ satisfies \eqref{eq:supply_passivity} with $\rho_{c} = 0 , \nu_{c} = -\rho_{1}+ \epsilon_{2}$. 

For\eqref{eq:kypC}, consider the partition
\eqref{eq:partition} for 
$$
\bar{A} = -\begin{bmatrix}
        \!A_{c}^{T}XA_{c}- X & A_{c}^{T}XB_{c}\\
        \!B_{c}^{T}XA_{c}  & B_{c}^{T}XB_{c} - (\frac{1}{\alpha_{1}} - \epsilon_{3})^{2} 
\end{bmatrix} 
$$
$$
\bar{B} = -\begin{bmatrix}
        C_{c}^{T}\\
        D_{c}
\end{bmatrix} 
\ \mbox{ and } \
\bar{C} = I .
$$
By Schur complement, the second inequality 
in \eqref{eq:schur} is equivalent to the inequality of the KYP lemma in {\cite[Chp. 4.6.2]{bib:CaverlyANDForbes2024}}, if we set $Q,S,R,P$ in  {\cite[Chp. 4.6.2]{bib:CaverlyANDForbes2024}} as $Q = -1, S = 0, R = (\frac{1}{\alpha_{1}} - \epsilon_{3})^{2}, P = X$. Hence, for $\gamma = 0$, the iFIR $C$ satisfies  \eqref{eq:supply_gain} with $\alpha_{c} = \frac{1}{\alpha_{1}}  - \epsilon_{3}$. 
\hfill$\blacksquare$

\subsubsection*{Proof of Theorem \ref{thm:sample}, Case \textbf{A}}

For \eqref{eq:sampleGamma}, the 
frequency representation of the iFIR controller is
$C(e^{j\theta}) = \sum_{t=0}^{m_{\mathrm{fb}}-1}g^{\mathrm{fb}}(t)e^{-j\theta t}$. 
For any given set of parameters of $a_{1},a_{2},b_{1},b_{2} \in \mathbb{R}$ such that $a_{2} > a_{1}, b_{2} > b_{1}$, and for all $\theta \in [0,\pi]$, we have
\begin{subequations}
\begin{align}
    a_{1} &< \mathrm{Re}[C(e^{j\theta})] < a_{2} \label{eq:eqva_1} \\
    &\iff 2a_{1} < C(e^{j\theta}) + C(e^{-j\theta}) < 2a_{2} \\
    &\iff a_{1} <  f_{r}(\theta)  < a_{2} \label{eq:eqva_2}\\
    \frac{-r}{\sqrt{2}} &< \mathrm{Im}[C(e^{j\theta})] < \frac{r}{\sqrt{2}} \label{eq:eqvb_1} \\
    &\iff \frac{-2r}{\sqrt{2}} <  -j(C(e^{j\theta}) - C(e^{-j\theta}))< \frac{2r}{\sqrt{2}} \\
    &\iff \frac{-r}{\sqrt{2}} <  f_{i}(\theta) < \frac{r}{\sqrt{2}}  \label{eq:eqvb_2}
\end{align}
\end{subequations}
where $f_{r},f_{i}$ are defined in \eqref{eq:fr}, \eqref{eq:fi}. 

Thus, \eqref{eq:eqva_2} and \eqref{eq:eqvb_2}
guarantee that the Nyquist diagram of the iFIR controller $C$ belongs to the interior of a rectangular box region. From \eqref{eq:aCaseA}, \eqref{eq:rCaseA}, taking $c = \frac{1}{2(-\nu_{1}+\epsilon_{1})}$,
\eqref{eq:eqva_1} and \eqref{eq:eqvb_1} define an open box whose four vertices lie on the boundary of the disk centred at $c$ with radius $r$.
Thus, by construction, as shown in Figure \ref{fig:box}, the Nyquist plot of the iFIR controller C satisfies
\eqref{eq:supply_passivity} for $\rho = \epsilon_1 -\nu_1$ and $\nu = \epsilon_2 -\rho_1$.

We now show that \eqref{eq:sample_fr},\eqref{eq:sample_fi} combined to \eqref{eq:epsilon},\eqref{eq:sampleDecay} imply \eqref{eq:eqva_2} and \eqref{eq:eqvb_2}.
Note that \eqref{eq:sample_fr} and \eqref{eq:sample_fi} sample
\eqref{eq:eqva_2} and \eqref{eq:eqvb_2} at $\theta = \frac{m\pi}{M} \in [0,\pi]$ for $0 \leq m \leq M$.

\cite[Thm. 4]{bib:Wang2024} shows that the decay constraint \eqref{eq:sampleDecay} guarantees the following, for all $M \geq 2, 0<h\leq1, h_{0} > 0$,
\begin{equation}
    \left| f_{r}(\theta) -  f_{r}\left(\theta = \frac{m\pi}{M}\right) \right| \leq \epsilon, \quad \forall m \in \{0,1,...,M\} 
\end{equation}
where $\epsilon$ is from \eqref{eq:epsilon}. Hence, \eqref{eq:epsilon}, \eqref{eq:sampleDecay}, \eqref{eq:sample_fr} imply \eqref{eq:eqva_2}.

A similar result holds for $f_i$.
For all $\theta \in [0,\pi]$ and $\Delta \in \mathbb{R}$, 
\begin{align}
    |f_{i}(\theta + \Delta) - f_{i}(\theta)| &\leq \sum_{t=0}^{m_{\mathrm{fb}}-1} |g^{\mathrm{fb}}(t)||\sin(t\theta + t\Delta) - \sin(t\theta)| \nonumber \\
    &\leq \sum_{t=0}^{m_{\mathrm{fb}}-1} |g^{\mathrm{fb}}(t)||t\Delta| \label{eq:random1} \nonumber \\
    &\leq (m_{\mathrm{fb}}-1) |\Delta| \sum_{t=0}^{m_{\mathrm{fb}}-1} |g^{\mathrm{fb}}(t)|.
\end{align}
Hence, following the derivation of \cite[Eq. 17]{bib:Wang2024}, we get
\begin{equation}
    \left| f_{i}(\theta) -  f_{i}\left(\theta = \frac{m\pi}{M}\right) \right| \leq \epsilon, \quad \forall m \in \{0,1,...,M\}.
\end{equation}
for $\epsilon$ given in \eqref{eq:epsilon}. Hence, \eqref{eq:epsilon}, \eqref{eq:sampleDecay}, \eqref{eq:sample_fi} imply \eqref{eq:eqvb_2}.

\subsubsection*{\textcolor{black}{
Case \textbf{B}}}
We need to show that, for all $\theta \in [0,\pi]$,
\begin{equation}
    Re[C(e^{j\theta})]  \geq \epsilon_{2}- \rho_{1}
\end{equation}
For the iFIR controller, this is satisfied if 
the integrator has a non-negative gain \eqref{eq:sampleRHPgamma} 
and the real part of the Nyquist diagram of the FIR part of iFIR controller is greater or equal to $-\rho_1 + \epsilon_2$. 
The latter is guaranteed by 
\begin{equation}
    \ f_{r}\left(\theta\right) > \epsilon_{2}- \rho_{1} .\label{eq:sampleRHP_fr_cont}
    \end{equation}
\eqref{eq:sampleRHP_fr_cont}
is equivalent to \eqref{eq:eqva_2} for $a_{2} = +\infty$ and $a_{1} =- \rho_{1} + \epsilon_{2}$. Thus, following the approach of the later part of the proof of Theorem \ref{thm:sample} Case \textbf{A}, \eqref{eq:sampleRHP_fr_cont}
is implied by \eqref{eq:epsilon}, \eqref{eq:sampleDecay}, \eqref{eq:sampleRHP_fr}. 

\subsubsection*{\textcolor{black}{
Case \textbf{C}}}
Given \eqref{eq:sampled_gamma2}, the
frequency representation of the iFIR controller reduces to
$C(e^{j\theta}) = \sum_{t=0}^{m_{\mathrm{fb}}-1}g^{\mathrm{fb}}(t)e^{-j\theta t}$. For 
$ -a_1 = -b_1 = a_2 = b_2 =\frac{1}{\sqrt{2}}\left(\frac{1}{\alpha_1}-\epsilon_2\right)$,
 \eqref{eq:eqva_1} and \eqref{eq:eqvb_1} define an open box
 whose four vertices are on the boundary of a disk of
  radius $\frac{1}{\alpha_1}-\epsilon_2$, centered at $0$. 
 Hence, the Nyquist plot of the iFIR controller C satisfies
\eqref{eq:supply_gain} for $\alpha = \frac{1}{\alpha_1} -\epsilon_2$.
 
From the proof of Case \textbf{A}, \eqref{eq:eqva_2} and \eqref{eq:eqvb_2} are implied by \eqref{eq:sampleDecay}, \eqref{eq:epsilon}, \eqref{eq:sampled_gain_real} and by \eqref{eq:sampleDecay}, \eqref{eq:epsilon}, \eqref{eq:sampled_gain_imag}, respectively. $\hfill\blacksquare$

\section{Examples} \label{sec:examples} 
\subsection{Gripper reference tracking and impedance shaping}\label{sec:gripper}
Figure \ref{fig:gripper} represents a simplified model of a
two-finger gripper, whose fingers are covered by a soft material.
The control goal is to design the 2DOF controller of Figure \ref{fig:blockdiag2DOF_P1P2} to shape the gripper's reference tracking and disturbance dynamics. $y$ represents the opening displacement of the gripper ($y = 0$ for the closed gripper) measured between the rigid part of the fingers. Likewise, $u$ is the opening control force acting on the rigid part of the fingers. The gripper is passive from $u$ to $\dot{y}$. So, the input to the iFIR controller $C$ is the velocity $\dot{y}$.

\begin{figure}[htbp]
    \begin{center}
        \includegraphics[width=0.8\columnwidth]{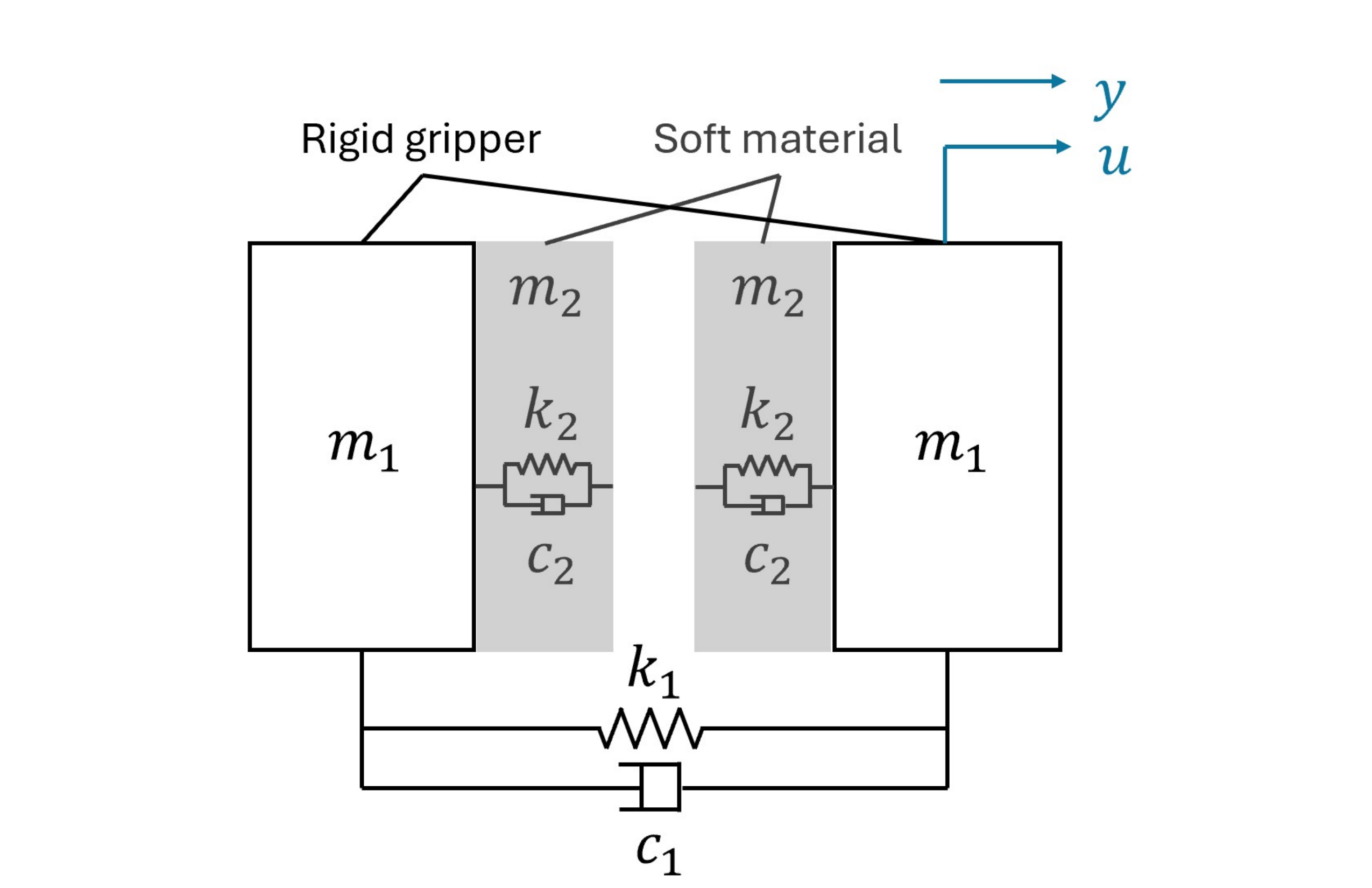} \vspace{-3mm}
        \caption{\textcolor{black}{A compliant two finger gripper. Its rigid structure is modeled by the masses $m_{1}$ and the compliant part is modelled by $m_{2}$ through spring/damper elements. The gripper closes driven by a mechanical component and opens through the action of a motor $u$.
        }} 
        \label{fig:gripper}
    \end{center} \vspace{-5mm}
\end{figure}

Rigid and soft parts of the gripper have masses $m_1=0.01$ kg, $m_2=0.005$ kg, respectively. The compliance of the soft material is modeled by a spring with stiffness $k_2=1$ N/m and a damper with damping parameter $c_2=0.2$ Ns/m. The gripper is mechanically constrained to close automatically through a spring
with stiffness $k_1=1.5$ N/m and \textcolor{black}{ a damper with damper parameter $c_1=0.1$ Ns/m.}

The control goals are defined by the reference models for tracking
and disturbance dynamics
\begin{equation*}
    M_r(s) = \frac{150}{(s+10)(s+15)}
    \quad 
    M_d(s) = \frac{1000}{(s+5)(s+10)(s+30)},
\end{equation*}
whose Bode diagrams are in Figure \ref{fig:bode_gripper_ref}.
The disturbance dynamics is chosen to preserve the open-loop dynamics
at low frequency but correct the resonance at high frequency. That is, we do not alter the mechanical impedance of the gripper at low frequencies, but we remove high-frequency oscillations. The reference tracking dynamics guarantees good tracking in the low frequency regime and a significant roll-off at high frequency, removing resonant oscillations. \vspace{-4mm}
\begin{figure}[htbp]
    \begin{center}
        \includegraphics[width=0.49\columnwidth]{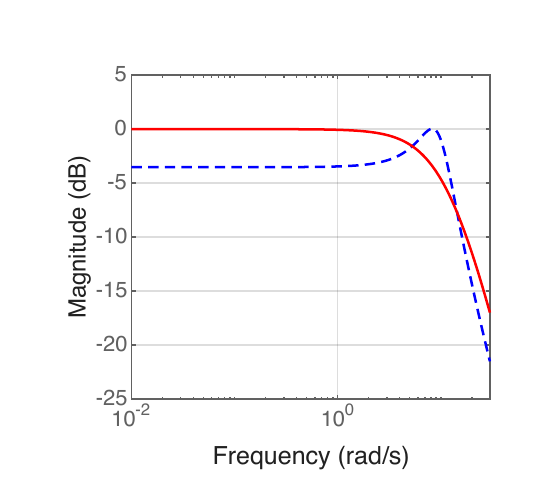} 
        \includegraphics[width=0.49\columnwidth]{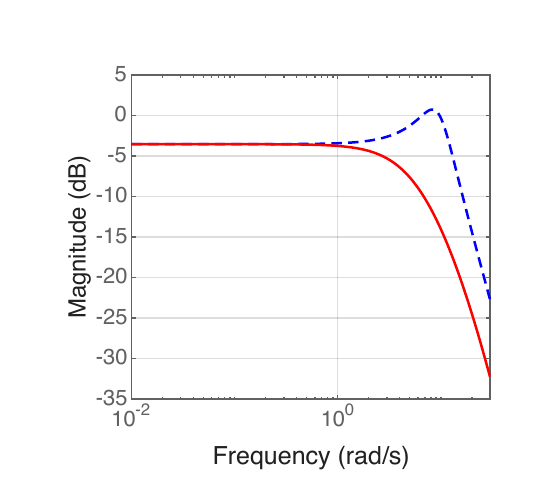} \vspace{-5mm}
        \caption{Bode diagrams for reference tracking (left) and disturbance response (right). Red: target dynamics. Blue-dashed: open loop.}  
        \label{fig:bode_gripper_ref}
    \end{center} \vspace{-3mm}
\end{figure}

For the controller design, we discretized the plant with zero order hold at a sampling time $T_{s} = 0.01s$. This introduces a shortage of input passivity in the plant. The plant is excited in open loop with $u(t) = \sum_{i=1}^{10} \cos(\omega_{i}T_{s}t + \phi^u_i)$ and $d(t) = \sum_{i=1}^{10} \cos(\omega_{i}T_{s}t + \phi^d_i)$ where $\omega_{i}$ linearly spans the frequency range of $[0.5,10]$ rad/s, 
$0 \leq \phi^u_i, \phi^d_i \leq \pi$,
and $t \in \{0,...,N-1\}$ for $N = 10000$. Several controllers are derived using \eqref{eq:obj2DOF_P1P2_filter} with and without the constraints of Section \ref{sec:dissi_constraints}, namely (i) optimal unconstrained PD controller, (ii) optimal unconstrained FIR controller ($m_{\mathrm{fb}}=50$), and (iii) optimal constrained FIR ($m_{\mathrm{fb}}=50$) based on Case A of Theorem \ref{thm:sample}. PD and unconstrained FIR are designed with clean data. The constrained FIR controller has been designed with clean and noisy data. 

The Nyquist diagrams of the FIR controllers are shown in Figure \ref{fig:nyquist_gripper}. 
All controllers are tested in simulation. 
Initially, the reference position is set at $0.02$m.
Later, a step disturbance of $0.005$N acting on $m_2$ 
is added. The time plots are shown in Figure
\ref{fig:step_response_gripper}. These show
a comparison between constrained FIR controllers and the ideal response of the target model. The comparison includes the FIR controller trained with outputs corrupted by a zero mean white noise, resulting in an average signal-to-noise ratio of $28.1$ dB for position and $30.6$ dB for velocity. Both FIR controllers are stable and show an excellent match with the target model, as further illustrated by the closed-loop Bode diagrams in Figure \ref{fig:bode_results_gripper}. The presence of noise in the training data slightly compromised the performance of the FIR controller.
In contrast, the optimal unconstrained FIR and the optimal PD controller ($C(z)=0.3979 +\frac{0.0136T_s}{1-z^{-1}}$) are unstable.
\vspace{-2mm}

\begin{figure}[htbp]
    \begin{center}
        \includegraphics[width=0.45\columnwidth]{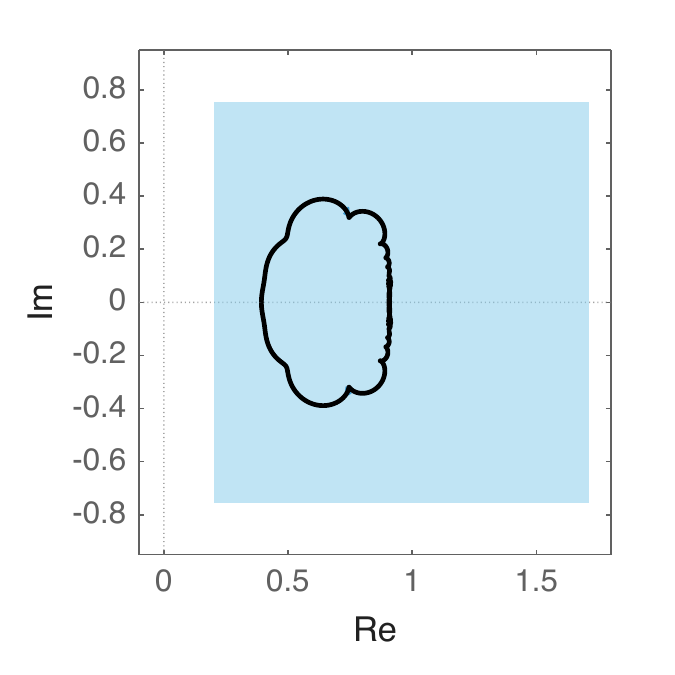}
        \includegraphics[width=0.45\columnwidth]{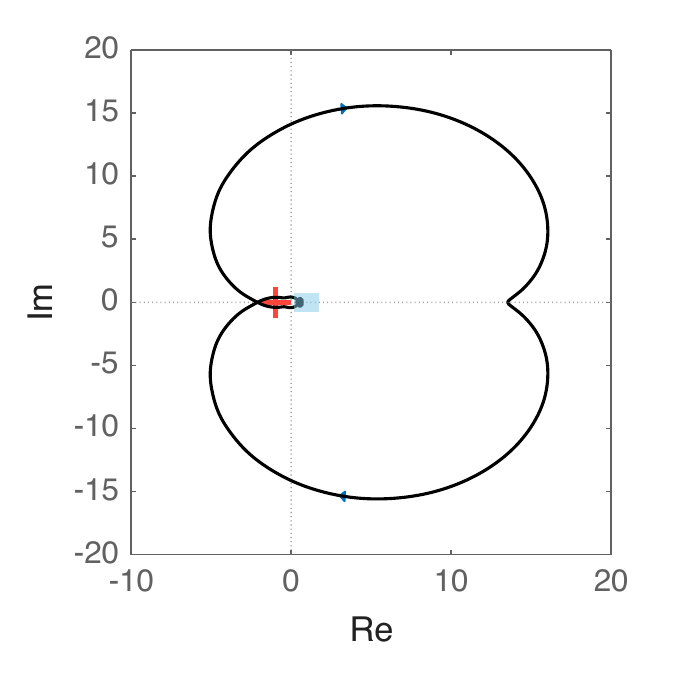} \vspace{-2mm}
        \caption{Nyquist diagrams of $C$. Left: constrained FIR (\eqref{eq:obj2DOF_P1P2_filter}, Theorem \ref{thm:sample}, clean data). Right: FIR without box constraint (\eqref{eq:obj2DOF_P1P2_filter}, clean data).} 
        \label{fig:nyquist_gripper}
    \end{center} \vspace{-4mm}
\end{figure}
\begin{figure}[htbp]
    \begin{center}
        \includegraphics[width=0.49\columnwidth]{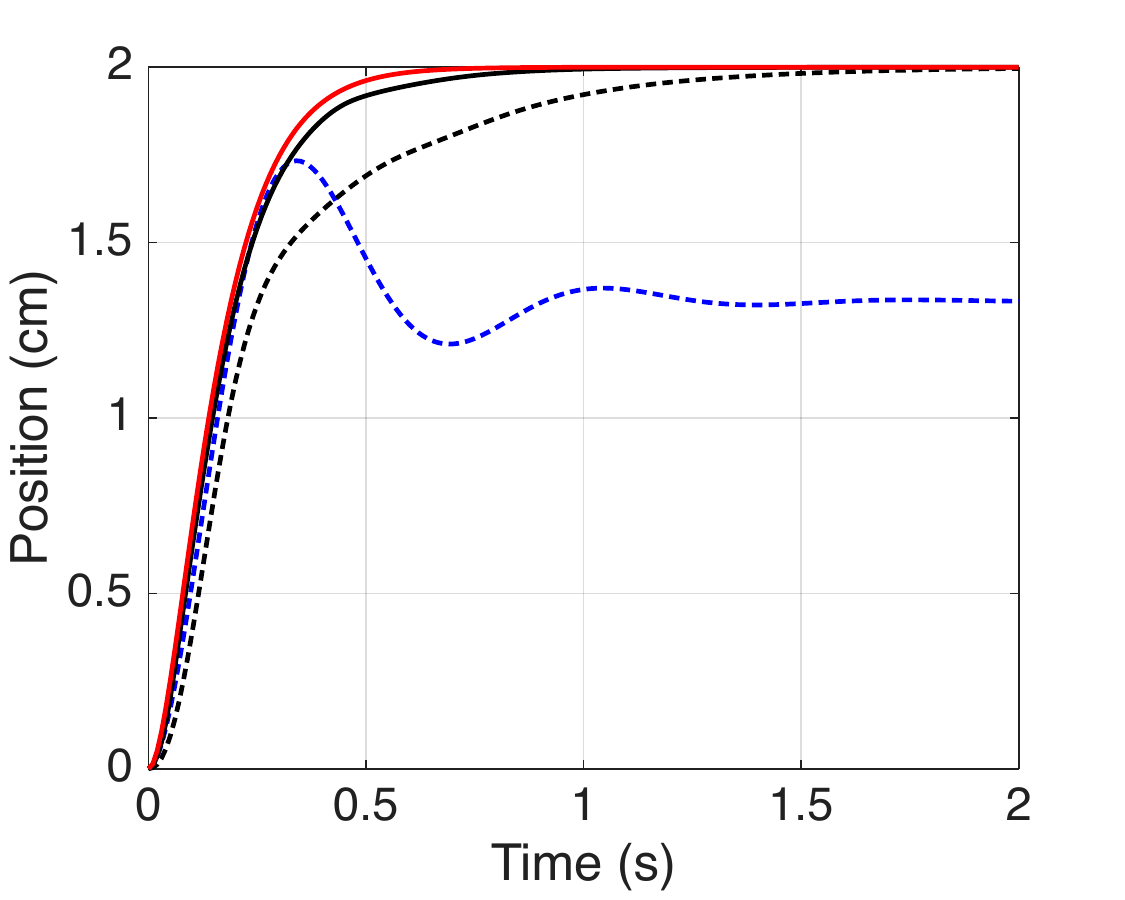}
        \includegraphics[width=0.49\columnwidth]{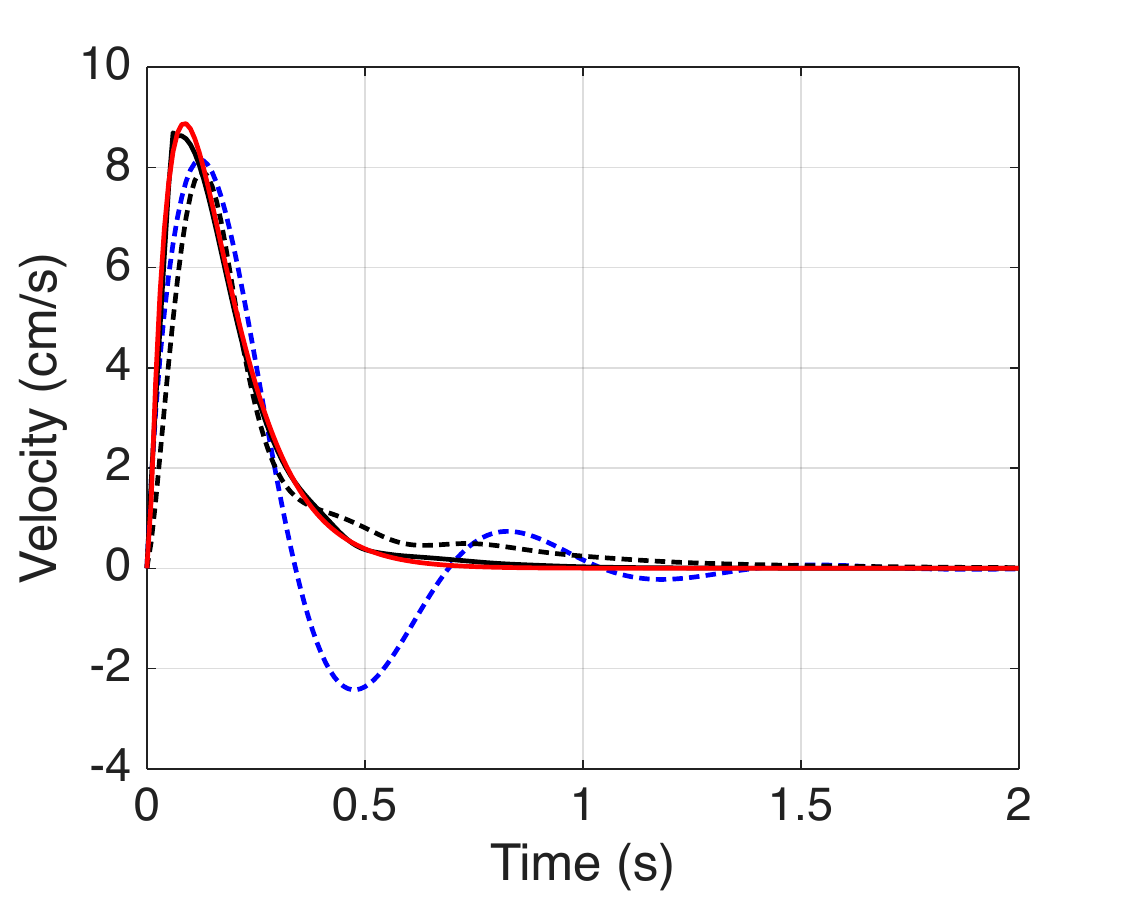}
        \includegraphics[width=0.49\columnwidth]{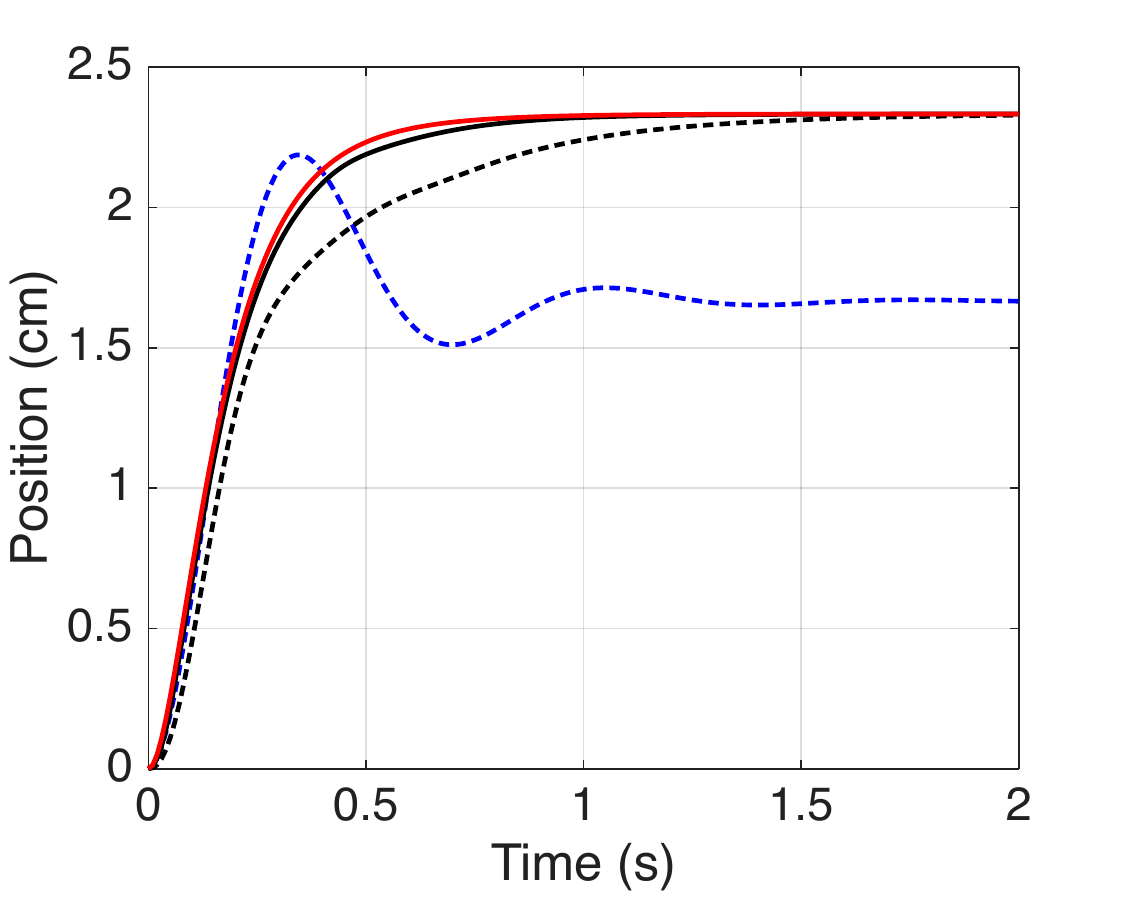}
        \includegraphics[width=0.49\columnwidth]{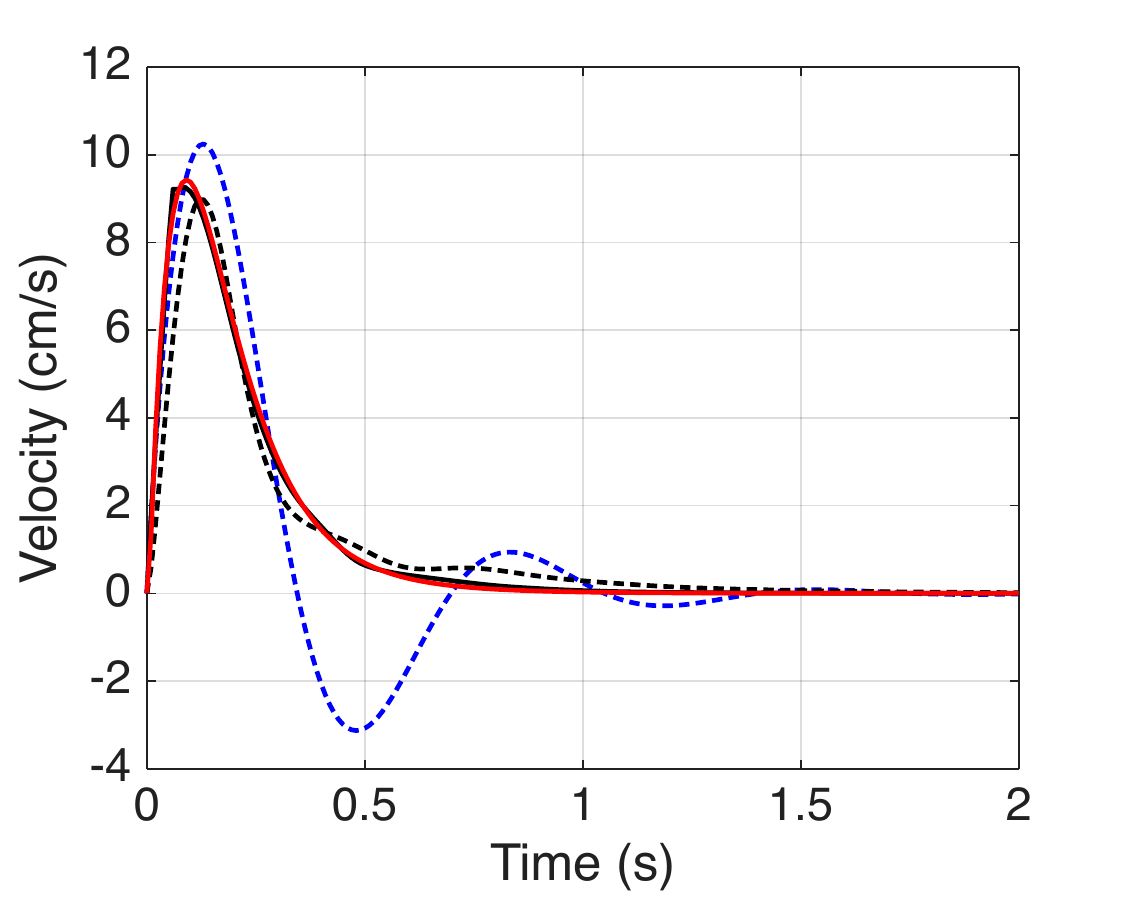}
        \vspace{-4mm}
        \caption{Reference tracking (top row) and reference tracking with a step disturbance (bottom row). Red: target response; black-continuous: closed-loop response with FIR controller trained with clean data; black-dashed: closed-loop response with FIR controller trained with noisy data; blue-dashed: open loop.} 
        \label{fig:step_response_gripper}
    \end{center} \vspace{-3mm}
\end{figure}

\begin{figure}[htbp]
\vspace{-3mm}
    \begin{center}
        \includegraphics[width=0.49\columnwidth]{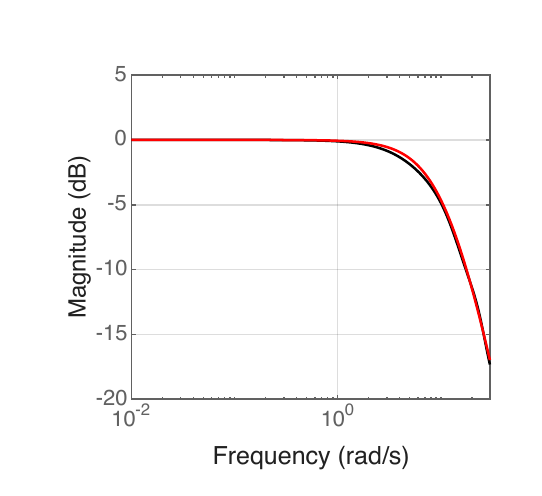}
        \includegraphics[width=0.49\columnwidth] {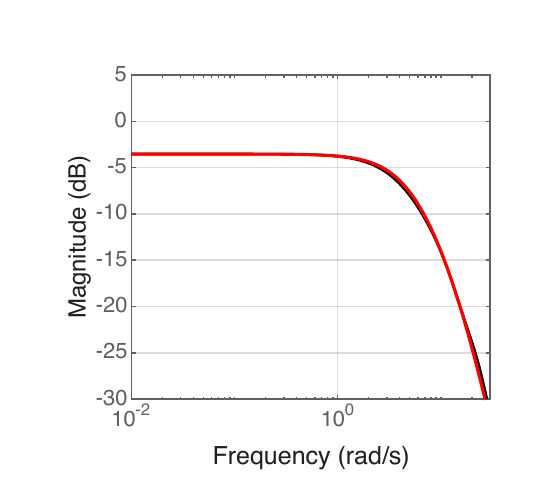}
        \vspace{-7mm}
        \caption{Bode diagrams for reference tracking (left) and disturbance response (right). Red: target reference model. Black-continuous: closed-loop transfer function with FIR controller (clean data).} 
        \label{fig:bode_results_gripper}
    \end{center} \vspace{-4mm}
\end{figure}

The numerical benchmark of Table \ref{tb:KYP_BOX} clarifies the
difference, in terms of computational complexity, between 
the constraints of Theorems \ref{thm:kyp} and \ref{thm:sample}.
The LMIs of Theorem \ref{thm:kyp} are more computationally expensive
than the frequency domain constraints of Theorem \ref{thm:sample}.
\textcolor{black}{ The results of Table \ref{tb:KYP_BOX} are obtained using CVXPY \cite{bib:cvxpy} with MOSEK solver \cite{bib:mosek}, on a standard laptop with Apple Silicon M2 chip. Compilation time is not included.}  

\begin{table}[htbp]
    \begin{center}
        \begin{tabular}{c|cccc} 
                 &  $m_{\mathrm{fb}}=25$ & $m_{\mathrm{fb}}=50$ & $m_{\mathrm{fb}}=100$ & $m_{\mathrm{fb}}=200$   \\ \hline
            KYP &  0.242 s&  0.554 s & 7.731 s &  186.428 s  \\ \hline
            Box, $M=300$ & 0.191 s& 0.221 s & 0.346 s & 0.569 s  \\
            Box, $M=500$ & 0.176 s& 0.239 s & 0.353 s & 0.590 s  \\
            Box, $M=700$ & 0.192 s& 0.258 s & 0.376 s & 0.704 s  \\
        \end{tabular} 
         \caption{Computation time: LMIs vs frequency constraints.} \label{tb:KYP_BOX} 
    \end{center} \vspace{-10mm}
\end{table}

\section{Conclusion}
The combination of virtual reference feedback tuning and dissipativity theory leads to a constrained least-squares problem where the least-squares objective function captures the performance target (reference tracking and disturbance model shaping), and the dissipativity constraints guarantee closed-loop stability. The formal results are proven and illustrated through a numerical example. A simple numerical benchmark shows the strong reduction in computation time that the new linear constraints offer in comparison to classical LMIs. As a side contribution, we extended VRFT synthesis to disturbance dynamics. Future works will focus on 
\textcolor{black}{
extending our result to nonlinear and to multiple-input multiple-output systems. 
}

\bibliographystyle{IEEEtran}   
\bibliography{IEEEabrv,bib}

\end{document}